\documentclass[a4paper,UKenglish,cleveref, autoref, thm-restate]{lipics-v2021}

\pdfoutput=1 %
\hideLIPIcs  %

\usepackage{amsmath}
\usepackage{amsthm}
\usepackage{adjustbox}
\usepackage[margin=0pt]{caption}
\usepackage{changepage}
\usepackage{setspace}
\usepackage{tikz}
\usepackage{xcolor}
\usepackage{paralist}
\usepackage{paralist}
\usepackage{cancel}
\usetikzlibrary{arrows.meta}
\usetikzlibrary{shapes,snakes}
\usepackage{bbding}
\usepackage{amstext}
\usepackage{tikz}
\usetikzlibrary{shapes,snakes}
\usepackage{amsmath}

\usepackage{array}
\usepackage{tabu}
\usepackage[T1]{fontenc}
\usepackage{array}
\usepackage{amssymb}
\usepackage{amsfonts}
\usepackage{xspace}

\usepackage{amsthm}
\usepackage{algorithm}
\usepackage[noend]{algorithmic}
\usepackage{microtype}
\usepackage[colorinlistoftodos,prependcaption,textsize=tiny]{todonotes}
\usepackage{hyperref}

\bibliographystyle{plainurl}%

\title{On Instance-Optimal Algorithms for a Generalization of \\Nuts and Bolts and Generalized Sorting}

\titlerunning{Instance-Optimal Algorithm and Sorting} %
\author{Mayank Goswami}{Queens College CUNY, Flushing, New York, USA}{mayank.goswami@qc.cuny.edu}{https://orcid.org/0000-0002-2111-3210}{Supported by NSF grant CCF-1910873}

\author{Riko Jacob}{IT University of Copenhagen, København S, Denmark}{rikj@itu.dk}{https://orcid.org/0000-0001-9470-1809}{Part of this work done during the second Hawaiian workshop on parallel algorithms and data structures, University of Hawaii at Manoa, Hawaii, USA, NSF Grant CCF-1930579}

\authorrunning{M. Goswami and R. Jacob} %

\Copyright{Mayank Goswami and Riko Jacob} %

\ccsdesc[500]{Theory of computation~Abstract machines}
\ccsdesc[500]{Theory of computation~Sorting and searching}
\keywords{Sorting, Priced Information, Instance Optimality, Nuts and Bolts} %

\category{} %

\relatedversion{An extended abstract of this work appeared in APPROX 2024} %

\nolinenumbers %

\EventEditors{John Q. Open and Joan R. Access}
\EventNoEds{2}
\EventLongTitle{42nd Conference on Very Important Topics (CVIT 2016)}
\EventShortTitle{CVIT 2016}
\EventAcronym{CVIT}
\EventYear{2016}
\EventDate{December 24--27, 2016}
\EventLocation{Little Whinging, United Kingdom}
\EventLogo{}
\SeriesVolume{42}
\ArticleNo{23}
\let\vv\vec
\def\A{\mathcal{A}}

\def\C{\mathcal{C}}
\def\D{\mathcal{D}}

\def\L{\mathcal{L}}
\def\T{\mathcal{T}}
\def\I{\mathcal{I}}
\def\N{\mathcal{N}}
\def\M{\mathcal{M}}
\def\D{\mathcal{D}}

\def\Qpos{{\mathbb{Q}^{\ge 0}}}

\DeclareMathOperator{\OPT}{OPT}

\date{}

\newcommand{\old}[1]{{{}}}

\begin{document}

\maketitle

\begin{abstract}
    We generalize the classical nuts and bolts problem to a setting where the input is a collection of $n$ nuts and $m$ bolts, and there is no promise of any matching pairs. It is not allowed to compare a nut directly with a nut or a bolt directly with a bolt, and the goal is to perform the fewest nut-bolt comparisons to discover the partial order between the  nuts and bolts. We term this problem \emph{bipartite sorting}.

    We show that instances of bipartite sorting of the same size exhibit a wide range of complexity, and propose to perform a fine-grained analysis for this problem. We rule out straightforward notions of instance-optimality as being too stringent, and adopt a \emph{neighborhood-based} definition. Our definition may be of independent interest as a unifying lens for instance-optimal algorithms for other static problems existing in literature. This includes problems like sorting (Estivill-Castro and Woods, ACM Comput. Surv. 1992), convex hull (Afshani, Barbay and Chan, JACM 2017), adaptive joins (Demaine, L\'{o}pez-Ortiz and Munro, SODA 2000), and the recent concept of universal optimality for graphs (Haeupler, Hladík, Rozhoň, Tarjan and Tětek, 2023).
    
    As our main result on bipartite sorting, we give a randomized algorithm that is within a factor of $O(\log ^3 (n+m))$ of being instance-optimal w.h.p., with respect to the neighborhood-based definition.
    
    As our second contribution, we generalize bipartite sorting to DAG sorting, when the underlying DAG is not necessarily bipartite. As an unexpected consequence of a simple algorithm for DAG sorting, we rule out a potential lower bound on the widely-studied problem of \emph{sorting with priced information}, posed by (Charikar, Fagin, Guruswami, Kleinberg, Raghavan and Sahai, STOC 2000). In this problem, comparing keys $i$ and $j$ has a known cost $c_{ij} \in \mathbb{R}^+ \cup \{\infty\}$, and the goal is to sort the keys in an instance-optimal way, by keeping the total cost of an algorithm as close as possible to $\sum_{i=1}^{n-1} c_{x(i)x(i+1)}$. Here $x(1) < \cdots < x(n)$ is the sorted order. While several special cases of cost functions have received a lot of attention in the community, no progress on the  general version with arbitrary costs has been reported so far.  One reason for this lack of progress seems to be a widely-cited $\Omega(n)$ lower bound on the competitive ratio for \emph{finding the maximum}. This $\Omega(n)$ lower bound by (Gupta and Kumar, FOCS 2000) uses costs in $\{0,1,n, \infty\}$, and although not extended to sorting, this barrier seems to have stalled any progress on the general cost case. 
    
    We rule out such a potential lower bound by showing the existence of an algorithm with a $\widetilde{O}(n^{3/4})$ competitive ratio for the $\{0,1,n,\infty\}$ cost version. This generalizes the setting of \emph{generalized sorting} proposed by (Huang, Kannan and Khanna, FOCS 2011), where the costs are either 1 or infinity, and the cost of the cheapest proof is always $n-1$.
\end{abstract}

\section{Introduction}

The classic nuts-and-bolts problem, originally mentioned as an exercise in \cite{rawlins1992compared}, asks: a disorganized carpenter has $n$ nuts and $n$ bolts, and there is (perfect) matching bolt for every nut. The only allowed comparison is between a nut and a bolt, and the result of such a comparison is either $<,=$ or $>$. The goal is to find the matching without comparing two nuts or two bolts to each other. A simple Quicksort type algorithm can be shown to solve this problem in optimal $O(n \log n)$ comparisons with high probability: Pick a random nut, compare to all bolts, find the matching bolt, and compare that bolt to all nuts. The problem is now partitioned into two subproblems with the match at the boundary; recurse. In a later work~\cite{alon1994matching}, Alon, Blum, Fiat, Kannan, Naor and Ostrovsky 
 developed a deterministic Quicksort-type algorithm that uses expanders and performs $O(n \text{ polylog } n)$ comparisons. This was later improved to an optimal $O(n \log n)$ comparisons by Koml\'{o}s, Ma, Sz\'{e}meredi \cite{komlos1998matching}, by performing substantial modifications on AKS sorting.

Our starting point is a remark by Koml\'{o}s, Ma and Sz\'{e}meredi: ``In particular, the fact that we can sort the nuts and bolts at all relies on the fact that there is a match between them.'' Indeed, if there is no matching (all comparisons come out $<$ or $>$), one realizes that the above randomized Quicksort based algorithm fails, as there is no partitioning into subproblems. The only case where sorting without a matching is possible is when nuts and bolts alternate in the final sorted order. Call this the perfectly interleaved case.
Koml\'{o}s, Ma and Sz\'{e}meredi observed (in a private communication with Aumann) that their AKS sorting-based algorithm sorts the nuts and bolts using $O(n \log n)$ comparisons in this setting\footnote{For a simple randomized algorithm that does the same, see Appendix~\ref{sec:backbonesort}.}. However, a general instance may not be perfectly interleaved, and this setting was left open.

\vspace{1mm}\noindent\textit{\underline{Generalized Nuts and Bolts}.} We focus on the problem alluded to by Koml\'{o}s, Ma, and Sz\'{e}meredi: what if the carpenter is completely disorganized, and has an unequal collection of nuts and bolts, without any matching pairs\footnote{It can be observed that having some matchings in the input only makes it easier to solve.}? 
That is, assume the carpenter has a set $R$ of $n$ nuts and a set $B$ of $m$ bolts, and is only allowed to compare a nut to a bolt, and the result is either $<,or >$. 
Unless $m=n$ and we are in the perfectly interleaved case, sorting $R \cup B$ is not possible: 
there could be two (or more) nuts (resp. bolts) that compare the same way to all the bolts (resp. nuts). 
A natural goal for the carpenter now is to ``sort as much as you can'', i.e., partition the set of nuts $R$ into subsets $R_1,R_2,\ldots$ such that for any $r,r' \in R_{i}$ and any $b \in B$, $r$ and $r'$ have the same order with $b$ (either both are smaller, or both are larger), and vice-versa (see figure~\ref{fig:stripes}). We term this generalization of nuts and bolts as \emph{bipartite sorting}: given the complete bipartite graph $G=(R \cup B,E)$, the goal is to discover the orientation\footnote{An edge $e=(u,v)$ in $G$ is oriented as $\vec{uv}$ if $u <v$ in the underlying DAG.} on all the edges in $E$ by querying as few of them as possible. 

\begin{figure}[htb]
    \centering
    \includegraphics[width=14cm]{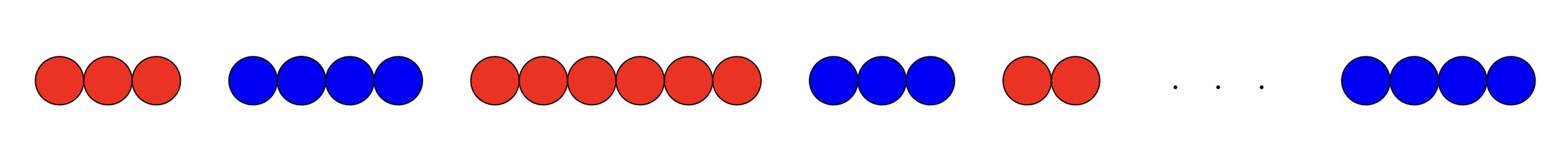}.
    \caption{An example output to an instance of the bipartite sorting problem. Continuous runs of incomparable nuts and bolts are called ``stripes''.}
    \label{fig:stripes}
\end{figure}

\noindent\textit{The need for fine-grained analysis} If all nuts are smaller than all bolts, it is clear that $nm$ comparisons are needed by any algorithm, and obviously this number of comparisons suffices for any instance. 
Recall that the perfectly interleaved case can be solved (in fact completely sorted) much faster with only $O(n \log n)$ comparisons. Assuming for simplicity that $m=\Theta(n)$, the inherent complexities of instances range in $[\tilde{O}(n),\Omega(n^2)]$. Is there a way to define instance-optimality for bipartite sorting that captures its variety of underlying instances, and is there a good instance-optimal algorithm?

It is worthwhile to imagine the DAGs for the above two instances: if all nuts are smaller than all bolts, the DAG $\vec{G}=(V=R \cup B,\vec{E})$ will have all edges oriented from $R$ to $B$, whereas the DAG in the perfectly interleaved case will be the transitive closure of the oriented edges in the \emph{directed Hamiltonian path corresponding to the sorted order of the nuts and the bolts}. A \emph{transitive reduction} of a DAG is the fewest number of oriented edges that by transitivity imply all other orientations. For the first instance, the transitive reduction has size $nm$, whereas for the second case, it has size $n-1$, and therefore if we ignore $\log ^{O(1)} n$ factors, for both instances, the sizes of their transitive reductions matches their complexity.
This immediately suggests a parametrization similar to an output-sensitive setting:
the number of comparisons of a good algorithm should be close to (say, within log factors of) the size of the transitive reduction of the underlying DAG.

However, the following instance dashes all hopes of an algorithm that performs only roughly as many queries as there are edges in the transitive reduction: 
$n-1$ nuts are all smaller than a special bolt $b$, which is smaller than a special nut $r$, which is smaller than all other $n-1$ bolts. We call this the \emph{``one-inversion''} instance. Even though the transitive reduction here has size $2n-1$, any algorithm must perform $n^2$ comparisons to find the hidden special pair $(r,b)$. Thus the gap between the transitive reduction and the inherent complexity of this instance is $\Omega(n)$, which is as large as the gap between the complexity of any two instances. Observe that the DAG for this instance is in some sense just ``one-flip-away'' from the all-nuts-smaller-than-all-bolts DAG, a phenomenon that will be important later. Given that the transitive reduction fails to capture the instances, we ask:

\vspace{-1mm}\begin{center}
    \textit{Is there another meaningful way to define instance-optimality for bipartite sorting that captures its variety of underlying instances, and is there a good instance-optimal algorithm?}
\end{center}

\begin{figure}
\begin{adjustbox}{center}
\begin{tikzpicture}[every text node part/.style={align=center}]

 \node(1) at (3,1) [rounded rectangle, draw, minimum height=1cm]{Arbitrary $f(x_1,\cdots,x_n)$ with Priced Information \cite{charikar2002query}};
  \node(19) at (-2,-2) [rounded rectangle, draw, minimum height=0.75cm]{Searching, Other Problems\\ \cite{kaplan2005learning,deshpande2014approximation,onak2006generalization}\\ \cite{mozes2008finding,bender2021batched}};
 \node(2) at (3,-2) [rounded rectangle, fill=gray!20, draw, minimum height=0.75cm] {DAG Sorting,\\ Arbitrary Costs, HA$\times$};
 \node(3) at (3,-6) [rounded rectangle, draw, minimum height=0.75cm] {Sorting with Priced\\ Information\cite{charikar2002query}, HA\Checkmark};
  \node(4) at (3,-11) [rounded rectangle, fill=gray!20, draw, minimum height=0.75cm] {Sorting with costs\\$\{0,1,n,\infty\}$ HA\Checkmark};
  \node(5) at (3,-15) [rounded rectangle, draw, minimum height=0.75cm] {Generalized Sorting\cite{6108244}, \\ Costs $\in \{1,\infty\}$, HA\Checkmark};

 \node(6) at (7,-6) [rounded rectangle, fill=gray!20, draw, minimum height=0.75cm] {Bipartite\\ Sorting, \\ HA$\times$};
 \node(7) at (7,-13) [rounded rectangle, draw, minimum height=0.75cm] {Bichromatic \cite{charikar2002query}\\ Sorting~\cite{bichromaticITCSarxiv}, HA\Checkmark};
 \node(8) at (3,-17.5) [rounded rectangle, draw, minimum height=0.75cm] {Unit-Cost Sorting};
 \node(9) at (-.8,-11) [rounded rectangle, draw, minimum height=0.75cm] {Structured Costs\\ Sorting\cite{gupta2001sorting}, HA\Checkmark};
 \node(10) at (11,-14) [rounded rectangle, fill=gray!20, draw, minimum height=0.75cm] {Perfectly \\ Interleaved, HA\Checkmark};
 \node(11) at (11,-10) [rounded rectangle, fill=gray!20, draw, minimum height=0.75cm] {Not Perfectly\\Interleaved, HA$\times$};
 \node(12) at (8.5,-16) [rounded rectangle, draw, minimum height=0.75cm] {Nuts and Bolts\cite{komlos1998matching, alon1994matching}};
 \node(13) at (-2.75,-6) [rounded rectangle, fill=gray!20, draw, minimum height=0.75cm] {DAG Sorting, \\Costs $\{0,1,\infty\}$ \\ HA$\times$};
 \node(14) at (11,-1) [rounded rectangle, draw, minimum height=0.75cm]{Finding Maximum \\Arbitrary Costs \cite{charikar2002query,gupta2001sorting}};
 \node(15) at (13,-4) [rounded rectangle, draw, minimum height=0.75cm]{Finding Maximum \\Costs $\{0,1,n,\infty\}$\cite{charikar2002query,gupta2001sorting}};

\draw [thick, ->] (1)--(2);
\draw [thick, ->] (2)--(3);
\draw [thick, ->] (3)--(4);
\draw [thick, ->] (4)--(5);
 \draw [thick, ->] (1)--(14);
 \draw [thick, ->] (14)--(15);
 \draw [thick, ->] (2)--(6);
 \draw [thick, ->] (2)--(13);
 \draw [thick, ->] (6)--(10);
 \draw [thick, ->] (6)--(11);
 \draw [thick, ->] (10)--(12);
\draw [thick, ->] (12)--(8);
\draw [thick, ->] (3)--(9);
\draw [thick, ->] (3)--(7);
\draw [thick, ->] (7)--(10);
\draw [thick, ->] (5)--(12);
\draw [thick, ->] (1)--(19);
\draw [thick, dotted, ->] (15) --node[near start, above right, sloped]{\cancel{\LARGE{$\Omega$}}} (4);
\draw [thick, dotted, ->] (13) --node[near start, above right, sloped]{\LARGE{$O$}} (4);
\draw [thick,->] (13)  --(-2.9,-15) |-(5);
\draw [thick, dotted, <-](6) --node[midway,sloped]{\Large{InversionSort}} (7);

\end{tikzpicture} 
\end{adjustbox}
\vspace*{\fill}

\caption{\small{The landscape of sorting with priced information. Solid arrows go from a problem to its special case. HA\Checkmark indicates that the Hamiltonian path is assumed to exist and HA$\times$ indicates that a Hamiltonian path may not exist. Problems shaded in gray are introduced or studied in this paper for the first time. Dotted arrows highlight our results, arrows with $O$ show algorithms carrying over from one problem to another, and \cancel{$\Omega$} show lower bounds not carrying over.}}
\label{fig:landscape}

\end{figure}
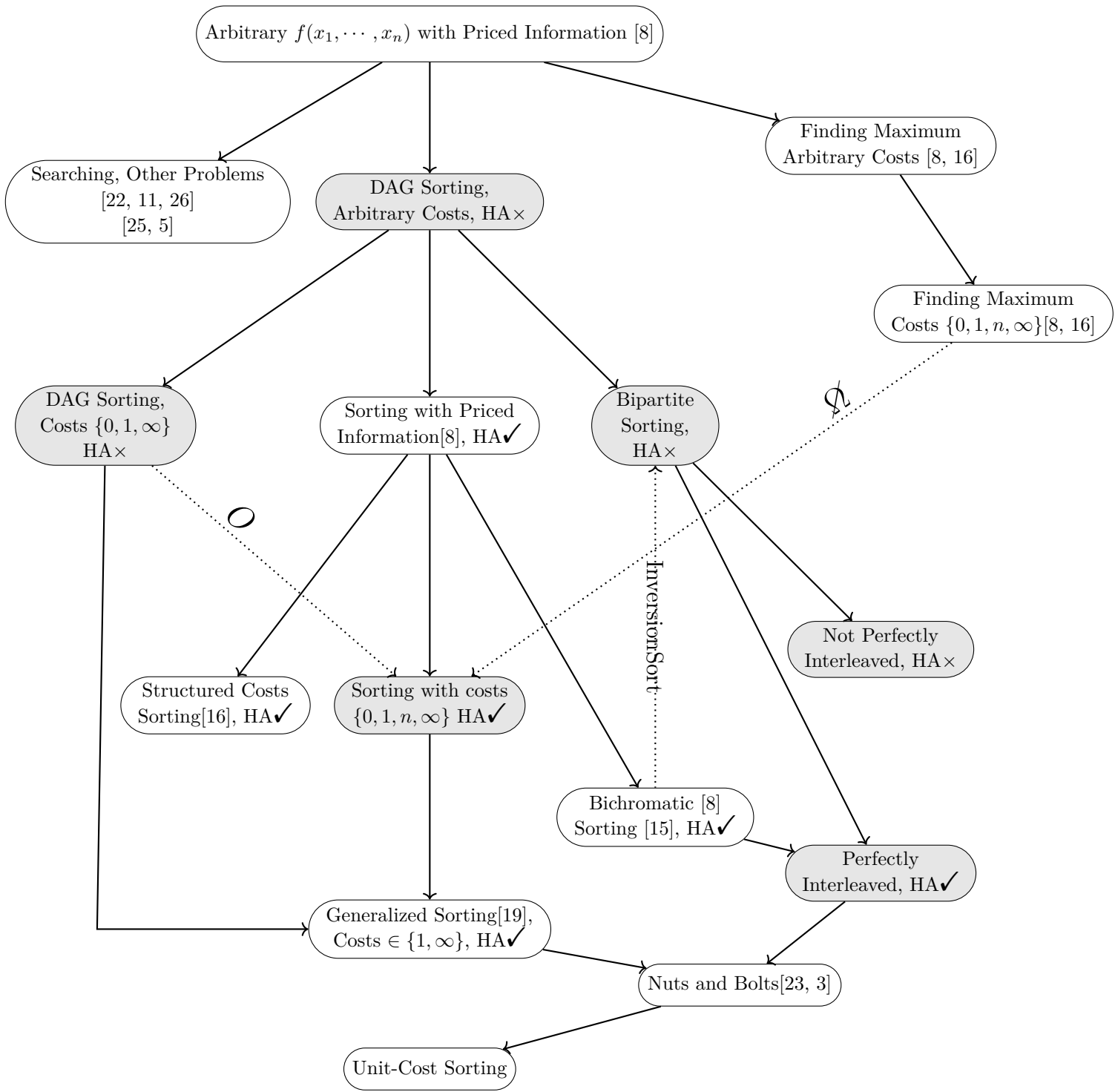
 
\section{Our Results and Technical Overview}

In this paper, we answer the above question in the affirmative, and in the process of doing so, make unexpected progress on a widely-studied problem called generalized sorting (Huang, Kannan and Khanna \cite{6108244}), which in turn is a special case of the sorting with priced information problem introduced by Charikar et al. \cite{charikar2002query}. We explain this connection first before stating our results.

\vspace{2mm}\noindent\textbf{DAG Sorting.} One can generalize bipartite sorting to DAG sorting where the set of allowed comparisons is represented by an arbitrary (not necessarily bipartite) graph $G=(V,E)$. The goal is still to discover all orientations in the underlying DAG $\vec{G}=(V,\vec{E})$ or equivalently, its transitive reduction, by querying as few edges of $G$ as possible, where a query of an edge $(u,v) \in E$ returns $<$ or $>$. Edges not present in $E$ cannot be queried.

It turns out that under the promise that $\vec{G}=(V,\vec{E})$ has a directed Hamiltonian path, DAG sorting is exactly equivalent to generalized sorting. 
Thus both bipartite sorting\footnote{And by transitivity, sorting nuts and bolts when perfectly interleaved, matching nuts and bolts, and classical sorting.} and generalized sorting can be viewed as special cases of DAG sorting. In fact, one can go a step further and assign non-negative costs to the edges in $E$ in DAG sorting, and now ask for algorithms that find the transitive reduction with the cheapest cost. If the underlying DAG has a Hamiltonian path, this is exactly the problem of sorting with priced information. 

\subsection{Bipartite Sorting}

As our main result, we define a meaningful notion of \emph{instance optimality} for bipartite sorting, and give an algorithm that is instance optimal up to a factor of $O(\log^3(n))$. We found this (definition, algorithm) result surprising because it shows that in some sense, the ``one-flip'' phenomenon (recall the ``one-inversion'' instance) mentioned in the introduction is the \emph{only} obstruction to achieving (almost) instance optimality. We believe our definition may be of independent interest as it unifies several previously-studied notions of instance optimality. 

\subsubsection{Defining Instance-Optimality}  

The transitive reduction of $\vec{G}$, denoted as $\vec{T}$, can be thought of as the ``cheapest proof'' of the underlying DAG being $\vec{G}$. 
Such a comparison to the cheapest proof has historically been very useful in defining instance optimality. Indeed, for the problems of generalized sorting and sorting with priced information, when the transitive reduction equals the directed Hamiltonian path, the cost of this directed Hamiltonian path is the measure of instance-optimality (and the factor by which an algorithm exceeds it is called \emph{competitive ratio} in the original work by Charikar et al.\cite{charikar2002query}). The cheapest proof appears again in the work by Demaine, L\'{o}pez-Ortiz and Munro \cite{demaine2000adaptive}], who study the problems of comparison-based set unions, intersections and differences. 
However in our setting, as the one-inversion instance shows, the size of the transitive reduction $|\vec{T}|$ is \emph{too strong a benchmark} to compare the performance of an algorithm to.  

An important work on instance optimality deviating from a comparison to cheapest proof is by [Afshani, Barbay and Chan,\cite{afshani2017instance}] which studies instance optimal algorithms for the convex hull problem. This was also adopted by [Cardinal, Dallant and Iacono \cite{cardinal2021instance}], who studied bichromatic rectangular visibility. When $x$ is an input sequence of points, \cite{afshani2017instance} and \cite{cardinal2021instance} define $OPT(x)$ to be the runtime of an algorithm that is \emph{order-oblivious}, i.e., OPT's code can depend on the set of elements in $x$, but not on their order. The fact that \emph{some} restriction on OPT is needed (at least for static problems) follows because comparing the runtime of an algorithm $A$ on an input $x$ to an algorithm (OPT) that is ``tailor-made'' for an input $x$ is too strong: if OPT's code can depend on $x$, $OPT(x)$ can be very small compared to $A(x)$. In the convex hull problem $OPT(x)$ would just be $O(n)$ as OPT only needs to read the input\footnote{[Afshani, Barbay and Chan \cite{afshani2017instance}] state: ``For example, consider the 2D convex hull problem, which has $\Theta(n \log n)$ worst-case complexity in the algebraic computation tree model: for every input sequence of $n$ points, one can easily design an algorithm $A$ (with its code depending on the input sequence) that runs in $O(n)$ time on that particular
sequence, thus ruling out the existence of an instance-optimal algorithm.'' 
}, and for DAG/bipartite sorting if OPT knows the underlying DAG $\vec{G}$ then (since we only count query complexity) OPT equals zero!

We posit a \emph{``neighborhood-based''} approach to defining notions of instance optimality. In a nutshell, we observe that most definitions for instance optimality boil down to choosing the ``right'' neighborhood. The smaller the neighborhood is, the stronger the notion of instance-optimality, and the harder it is to attain. If the neighborhood is the set of all instances, we are back to worst-case analysis. Thus the art is in choosing the smallest meaningful neighborhood that still allows for instance optimal algorithms. This general step-by-step process of increasing the neighborhood until one sees hope for instance-optimality reveals quite a bit about the fine-grained nature of a problem. Perhaps this way of looking at instance-optimality was already known, but since we have not seen it in print, we show (in Appendix~\ref{app:framework}) how most of the existing works on instance-optimality, ranging from the classical works on adaptive sorting and presortedness \cite{AdaptiveCastroWood92}, to later works by Demaine et al. \cite{demaine2000adaptive} and Afshani et al.\cite{afshani2017instance}, and including the very recent work on universal optimality \cite{haeupler2023universal}, all agree with this paradigm . 

We define a neighborhood of the underlying DAG $\vec{G}$, denoted as $\N(\vv{G})$. See Definition~\ref{introdef} for a precise definition but intuitively, this is the set of all DAGs that are either isomorphic, or ``one-flip'' away from $\vec{G}$. Define the runtime $T_{A}(\vv{G})$ of a randomized algorithm $A$ on an instance $\vv{G}$ as its expected comparison-cost on instance~$\vv{G}$. %
Let $\C(\vv{G})$ be the set of randomized (Las Vegas) algorithms that are correct for all instances in~$\N(\vv{G})$.
Now define \[\OPT(\vv{G}) = \inf_{A\in \C(\vv{G})} \max_{\vv{G}'\in \N(\vv{G}) } T_{A}(\vv{G}').\]

\noindent For $\alpha \geq 1$, we say an algorithm $A$ is $\alpha$-instance optimal if for every instance $\vv{G}$, $T_{A}(\vv{G}) \leq \alpha \OPT(\vv{G})$. 
Does there exist an algorithm achieving $\alpha=O(1)$, or $\alpha= \log^{O(1)} n$?

\subsubsection{InversionSort: An Almost Instance Optimal Algorithm}

The algorithm we investigate for bipartite sorting is a variant of an algorithm InversionSort that was recently presented by us (Goswami and Jacob~\cite{bichromaticITCSarxiv}) for the version when there is a sorted order on the nuts and bolts, and nuts can be compared to nuts (at a cost $\alpha>1$) and bolts can be compared to bolts (at a cost $\beta>1$). In bipartite sorting, two nuts (or two bolts) cannot be compared to each other, i.e., $\alpha=\beta=\infty$, but due to the similarity to the previous algorithm we call the algorithm for bipartite sorting InversionSort too. Note that in \cite{bichromaticITCSarxiv} the previous algorithm was compared to the cost of the Hamiltonian, whereas now not only may the Hamiltonian cease to exist, its natural counterpart (the transitive reduction) is a weak lower bound.

As a first try, let us see what goes wrong when we apply the simple randomized QuickSort algorithm for nuts and bolts to bipartite sorting. We pick a random nut $r$ and use it to pivot the bolts obtaining two sets $B_{<r}$ and $B_{>r}$. Since no match was obtained, in the alternating step we can select a random bolt $b$, say from $B_{>r}$, and pivot the nuts, obtaining $R_{<b}$ (containing $r$) and $R_{>b}$. Instead of the two perfectly-partitioned subproblems obtained in nuts and bolts, we unfortunately now have three subproblems: $(R_{<b},B_{<r})$, $(R_{<b},B_{>r})$, and $(R_{>b},B_{>r})$. 

\emph{If} the nuts and bolts were perfectly interleaved, we show (Appendix~\ref{sec:backbonesort}) that a BFS-style Quicksort algorithm that we call BackboneSort sorts using $O(n \log n)$ comparisons. BackboneSort works in a sequence of alternating phases, a nut-phase and a bolt-phase. In a nut phase it tries to refine the nut subproblems and makes progress over the three subproblems above in a round-robin fashion: for example, to make progress in $(R_{<b},B_{<r})$, it would select a random bolt in $B_{<r}$ and pivot $R_{<b}$, and vice-versa in a bolt-phase. 

Unfortunately, in a general instance of bipartite sorting the nuts and bolts are not perfectly interleaved everywhere, and it turns out that BackboneSort can perform badly (see Theorem~\ref{thm:balloon}). The reason for this behavior arises from selecting a ``random'' pivot in the subproblem: for the perfectly interleaved instance, its sub-instances are also perfectly interleaved (in particular all three subproblems $(R_{<b},B_{<r})$, $(R_{<b},B_{>r})$, and $(R_{>b},B_{>r})$ can be shown to be roughly a constant fraction of the original problem with good probability) , and then such a random pivot is guaranteed to be good, just like in randomized Quicksort. However, if the underlying instance is lop-sided (think of many nuts in $R_{<b}$ sandwiched between two bolts in $B_{<r}$), a random bolt is not a good pivot. To add to this complication, an instance may be lop-sided in one region and perfectly interleaved in another, and an instance-optimal algorithm should ideally be able to detect such a situation.

InversionSort proceeds similarly to BackboneSort, but instead of selecting random pivots, it performs a random comparison: when trying to make progress in $(R_{<b},B_{<r})$, it will look for ``inversions'', defined as a pair $(r,b)\in R_{<b} \times B_{<r}$ such that $b<r$. When it finds such a pair, it uses them as pivots. The intuition is that regions having ``easy'' subinstances (like perfectly interleaved ones) will resolve fast, whereas those that are lop-sided, like all nuts less than all bolts, will take longer, but this is a necessary task in this sub-region.

The following theorem quantifies the performance of InversionSort. Here, $N=n+m$ and Definition~\ref{introdef} is our precise definition of instance optimality, sketched in the previous section.

\begin{restatable}{theorem}{bipartitetheorem}[Instance Optimality of InversionSort]\label{thm:bipartiteinversionsort}
    There exists a constant $c>0$, such that for every instance~$\I$, the cost of InversionSort on~$\I$ is, with probability at least $1-1/N$, at most $c(\log N)^3 \OPT(\I)$, where $\OPT(\I)$ is as in Definition~\ref{introdef}.
\end{restatable}

Thus InversionSort is $O(\log^3 N)$-instance optimal, with respect to a natural notion of instance-optimality that accounts for the one-flip neighborhood of the underlying DAG. We conjecture that InversionSort is actually $O(1)$-instance optimal with this notion\footnote{As evidence, we prove in Theorem~\ref{thm:InvsortOnInterleaved} in Appendix~\ref{sec:InvsortOnInterleaved} that InversionSort solves the perfectly interleaved instance in an optimal $O(n \log n)$ comparisons.}.

\subsection{Unexpected Result: Generalized Sorting and Sorting with Priced Information}

While Theorem~\ref{thm:bipartiteinversionsort} gives an algorithm for bipartite sorting that is instance-optimal, what can we say for DAG sorting? Unfortunately proving a polylog-instance optimality guarantee seems out of reach for now. The reason is that even the ``sortable'' case of DAG sorting, where the DAG has a directed Hamiltonian path, has a current best bound of $\tilde{O}(n^{1.5})$ comparisons, or in other words, is a factor $\tilde{O}(\sqrt{n})$ away from the Hamiltonian cost (which is $n-1$). This results from an interesting randomized algorithm by Huang, Kannan and Khanna~\cite{6108244} that uses the work by Kahn and Linial~\cite{kahn1991balancing} on balancing extensions via the Brunn-Minkowski theorem.

For a DAG that is not sortable, we extend the algorithm of Huang, Kannan and Khanna to give an algorithm that performs $\tilde{O}(\min(wn^{1.5},n^2))$ comparisons and outputs the transitive reduction of $\vec{G}$ (see Theorem~\ref{get01dag}). Here $w$ denotes the width of the DAG, which is the size of its largest antichain (a set of incomparable elements). We now point out an unexpected consequence of this result.

The problem of \emph{sorting with priced information} introduced by Charikar, Fagin, Guruswami, Kleinberg, Raghavan and Sahai, \cite{charikar2002query}, is a  generalization of the classical, unit-cost, comparison-based sorting, defined as follows. The input is a weighted undirected graph $G$ on $n$ vertices, with the cost $c_{ij} \in \mathbb{R}_{\geq 0}$ on the edge $e_{ij}$ indicating the cost to compare keys (represented by vertices) $v_i$ and $v_j$. As before, edges not in $G$ have cost $\infty$ and cannot be queried, and a  query on an edge $e_{ij}$ reveals if $v_i <v_j$ (indicated as $\vec{e}_{ij}$) or $v_{i} >v_{j}$ (indicated as $\vec{e}_{ji}$).

Since the hidden Hamiltonian path $\mathcal{H}$ is the cheapest proof, its cost which equals $\sum_{i=1}^{n-1} c_{x(i)x(i+1)}$ is a lower bound. Here $x(1) < \cdots < x(n)$ is the sorted order. \cite{charikar2002query} propose finding algorithms that come as close to the cost of $\mathcal{H}$ as possible. The \emph{competitive ratio} is defined as the ratio of the cost of the algorithm to the cost of $\mathcal{H}$, and the goal is to find an algorithm with small competitive ratio.

 Several special cases of cost functions have been studied, but for arbitrary costs, almost nothing is known about the above problem. Note that the $\tilde{O}(n^{1.5})$ result by Huang, Kannan and Khanna~\cite{6108244} works when all costs are either $1$ or $\infty$. What about the version with arbitrary costs? Many works state that the general-cost version is ``arbitrarily bad'' (Huang, Kannan and Khanna \cite{6108244}), ``bleak'' or ``hopeless'' (Gupta and Kumar \cite{gupta2001sorting}). The only evidence for this is an $\Omega(n)$ lower bound on the competitive ratio of any algorithm that \emph{finds the maximum}. There is an example where the costs are either $0$, $1$, $n$ or $\infty$, and one can show that any algorithm that finds the maximum element $m$ must have cost $\Omega(n)$ times that of the cheapest proof (of $m$ being the maximum) (Charikar et al. \cite{charikar2002query}, Hartline et al. \cite{hartlinepersonal}, Gupta and Kumar \cite{gupta2001sorting}). While it certainly provided intuition, the $\Omega(n)$ lower bound for maximum (with costs in $\{0,1,n,\infty\}$) was never extended to sorting. 
 
 This makes the $\{0,1,n,\infty\}$ version interesting due to three reasons:
 \begin{compactitem}
     \item do instances in this cost regime contain an $\Omega(n)$ lower bound for sorting too?
     \item it is the natural step-up from generalized sorting with costs in $\{1,\infty\}$, and 
     \item this is the first instance with forbidden comparisons that requires an \emph{instance-specific analysis} for the competitive ratio. For generalized sorting and for stochastic sorting\footnote{Also initiated by Huang, Kannan and Khanna~\cite{6108244}, this is the version where the input graph $G$ is random}, the cost of the Hamiltonian is always $n-1$, so one only has to bound the cost of an algorithm, without worrying about the underlying instance. However, when costs are in $\{0,1,n,\infty\}$ the cost of the Hamiltonian can range from $0$ to $n(n-1)$, and so an algorithm must adapt to the underlying instance.
 \end{compactitem} 

Our second theorem addresses this cost version of sorting with priced information, showing that it cannot be the $\{0,1,n,\infty\}$ version that makes sorting hopeless!

\begin{restatable}{theorem}{resultongeneralsorting}\label{mainresult1}
   Consider the problem of sorting when every comparison has a cost in $\{0,1,F, \infty\}$, for any $F \geq n^{3/4}$. There exists a polynomial time randomized algorithm whose competitive ratio is $\widetilde O(n^{3/4})$, with high probability. 
\end{restatable}

The main ingredient in the proof of this theorem is the aforementioned $\tilde{O}(\min(wn^{1.5},n^2))$ comparisons algorithm for DAG sorting. Even though DAG sorting does not promise a Hamiltonian, it turns out to be useful because it can be used as a subroutine in a ``greedy'' algorithm for the $\{0,1,n,\infty\}$ cost version: obtain with cheapest cost the partial DAG formed by all costs $0$ and $1$ comparisons.

\vspace{2mm}\noindent\textbf{Organization:} We state our problems precisely in Section~\ref{sec:problemDeg}. This is followed by our result on bipartite sorting (Theorem~\ref{thm:bipartiteinversionsort}) in Section~\ref{sec:bipartitemain}, and our result on sorting with priced information (Theorem~\ref{mainresult1}) in Section~\ref{sec:generalsorting}.

\section{Problem Definitions}\label{sec:problemDeg}

We formally define the problem of bipartite sorting first, and then the problems of DAG sorting and sorting with priced information. We invite the reader to see Figure~\ref{fig:landscape} in the appendix for the landscape of these problems, their relations to each other, and how our results fit in this landscape.

\begin{definition}[Bipartite Sorting]\label{def:bipartitesorting}

Input: A complete bipartite undirected graph $G$ of unit costs on $V= R \cup B$. Only edges in $G$ can be queried (at unit cost), and querying an undirected edge $(u,v)$ has one of two outcomes, $u<v$ (implying $\vec{uv} \in \vec{G}$) or $u>v$ (implying $\vec{vu} \in \vec{G}$). $N:=|V|$.

The \textit{instance} of bipartite sorting is defined by a partition of $R$ (reds) and $B$ (blues) into stripes  $(S_1,\cdots,S_k)$ (Figure~\ref{fig:stripes}), i.e., by the relative order\footnote{$S_1$ is the set of all sources in the DAG $\vv G$; since red and blue elements can be compared with unit-cost, $S_1$ must necessarily be of one color. $S_i$ can be iteratively defined as the set of sources in $\vv G$ after all stripes $1$ to $i-1$ have been deleted.} between the reds and the blues. Note that $S_{i}$s are unordered sets. The DAG $\vec G$ has, for all $1 \leq i \leq k-1$, an edge from every element in $S_i$ to every element of the other color greater than it, i.e., to every element in $\cup_{\ell \geq 0} { S_{i+(2\ell+1)}}$.

\noindent Output: The sequence of stripes $(S_1,\cdots,S_{k})$ (see figure~\ref{fig:stripes}). Equivalently, the transitive reduction of $\vec G$. 
\end{definition}

\noindent\textbf{DAG Sorting.} Let $\mathcal{P}$ denote a partial order on a set of $n$ elements, and let $\vv{P}$ denote the transitively closed DAG on $V=\{v_1,\cdots,v_{n}\}$ indicating all order relations in $\mathcal{P}$. That is, the vertex $v_{i}$ is identified with element $i$, the edge $\overrightarrow{e_{ij}}$ between vertices $i$ and $j$ exists if $v_i < v_j$, the edge $\overleftarrow{e_{ij}}$ exists if $v_i > v_j$, and no edge between $v_i$ and $v_j$ exists if elements $i$ and $j$ are incomparable. For convenience, let $P$ denote the set of (undirected) edges in $\vv{P}$ without their directions. Let $w$ denote the width of $\vv{P}$.

\begin{definition}[Implied and essential edge, transitive reduction \cite{aho1972transitive}]\label{def:transitiveReduction}
    Given a DAG~$\vv G=(V,\vv E)$, an edge $(u,v)\in \vv E$ is \emph{implied}, if there is a directed path in $\vv G$ from $u$ to~$v$.
    Otherwise, $(u,v)$ is \emph{essential}. 
    The set of essential edges is called the \emph{transitive reduction} of $\vv G$.
\end{definition}

\noindent Note that for every implied edge $(u,v)\in \vv E$, there is a directed $u$ to $v$ path of essential edges in~$\vv G$. It turns out that the transitive reduction of a DAG is unique \cite{aho1972transitive}. Let $\vv{\mathcal{T}}$ denote the transitive reduction of $\vv{P}$.  Let $\mathcal{T}$ denote the undirected version of $\vv{\mathcal{T}}$.

\begin{definition}[DAG Sorting, arbitrary costs]\label{def:dagsorting}

Input: An undirected graph $G=(V,E \subset P)$ with costs $c_{ij} \in \mathbb{R}_{\geq 0}$ on edges. An oracle that answers, given an undirected edge $e_{ij}$ of $G$, its orientation in $\vv{P}$. 

Promise: $\mathcal{T} \subset E$. That is, the queryable edges contain the edges of the transitive reduction.

Output: $\vv{\mathcal{T}}$.

Cost of an algorithm $\mathcal{A}$: The total cost of the edges queried by $\mathcal{A}$ on the instance $\vv{P}$. This will be denoted by $\text{cost}(\mathcal{A},\vv{G})$, where $\vv{G}$ is the directed version of $G$ (also referred to as the instance from now on), and contains all the information about $\vv{P}$.

When $c_{ij}=1$ for all $i,j$, we call the problem simply DAG Sorting.

\end{definition}

\noindent We will only care about the query cost of the algorithm, and while there may be a compact representation of $\vv{\mathcal{T}}$, we ask the algorithm to output $\vv{\mathcal{T}}$ in its entirety for simplicity.

DAG sorting generalizes sorting with priced information, which we describe next. This problem was introduced by \cite{charikar2002query} in the broader context of querying with priced information, where one wants to compute a function $f$ of $n$ inputs, and querying an input has a certain cost associated to it. The competitive ratio is defined as the (worst case) ratio of the cost of the query strategy to the \textit{cost of the cheapest proof of $f$}. This work initiated a multitude of papers on priced information, studying problems like learning with attribute costs \cite{kaplan2005learning}, stochastic boolean function evaluation \cite{deshpande2014approximation}, searching on trees \cite{onak2006generalization,mozes2008finding}, and priced information in external memory \cite{bender2021batched}.

\begin{definition}[Sorting with Priced Information \cite{charikar2002query}]
Sorting with priced information is a special case of DAG sorting, when the partial order $\mathcal{P}$ is a total order. In this case, $\vv{P}$ is a tournament and therefore $P$ is a complete graph. $\vv{\mathcal{T}}$ is simply a directed Hamiltonian path $\mathcal{H}$ in $\vv{P}$. The input $G$ is any graph on $n$ vertices containing the edges of $\mathcal{H}$ without their directions, and the output is $\mathcal{H}$. The total cost of the edges queried by an algorithm $\mathcal{A}$ on the instance $\vv{G}$ will be denoted by $\text{cost}(\mathcal{A},\vv{G})$.
\end{definition}

\noindent\textbf{Competitive ratio for sorting with priced information} The \textbf{competitive ratio} of $A$ (as defined in \cite{charikar2002query}) is $\rho(\mathcal{A},n)= \max_{\vv{G}} \text{cost}(A,\vv{G})/\text{cost}(\mathcal{H})$, where the maximum is taken over all instances $\vv{G}$ of $n$ vertices with a total order. The goal is to find sorting algorithms with small competitive ratio. Here $\text{cost}(\mathcal{H})$ is considered as a proxy for the complexity of the instance $\vv{G}$, as it is the cheapest proof. It certainly is a valid lower bound, for the edges on the Hamiltonian \textit{must} be queried by any algorithm.

\noindent For example, when $G$ is the complete unit-cost graph, MergeSort or QuickSort achieve a competitive ratio of $\Theta(\log n)$ (the latter w.h.p. if randomized). Similarly, when $G$ is unit-cost but not complete, the $\tilde{O}(n^{1.5})$ cost algorithm algorithm by [Huang, Kannan and Khanna \cite{6108244}] has a competitive ratio of $\tilde{O}(\sqrt{n})$. As mentioned, very little is known about the case when $G$ has arbitrary costs. 

We end with the remark that DAG sorting is closely related to a line of work initiated by [Faigle and Tur\'{a}n,\cite{faigle1988sorting}] called sorting a partial ordered set, or identifying a poset. This was followed up by several works such as \cite{daskalakis2011sorting} and \cite{dubhashi1993searching}. For a survey on this line of work that also includes generalized sorting, we refer the reader to \cite{cardinal2013generalized}.

\section{Results on Bipartite Sorting}\label{sec:bipartitemain}

This section is divided into three subsections. In Section~\ref{sec:defineio} we formally state our definition of instance-optimality. In Section~\ref{subsec:lower} we derive some lower bounds on OPT stemming from the definition of instance-optimality. Finally, in Section~\ref{subsec:inversionsortanalyze} we define InversionSort, and prove that it comes close to achieving instance-optimality (Theorem~\ref{thm:bipartiteinversionsort}) by charging the comparisons performed by InversionSort to the derived lower bounds.

\subsection{Defining Instance-Optimality}\label{sec:defineio}

As mentioned in the introduction, the following instance of bipartite sorting shows that comparing the cost of an algorithm to the transitive reduction is hopeless. 
\begin{definition}[One-inversion Instance]\label{def:one-flip} Let $G=(R,B,E)$ be the undirected complete bipartite graph on $|R|=|B|=n/2$. Pick an arbitrary $r \in R$, $b \in B$, let $R_{-r}=R \setminus \{r\}$ and $B_{-b}=B \setminus \{b\}$. Define a DAG $\vv{G}$ via its transitive reduction as follows.
\[\text{TR}(\vv{G})=    \{\vv{xb}: x \in R_{-r} \} \cup \{ \vv{br}\} \cup \{ \vv{ry}: y \in B_{-b} \} .\]
\end{definition}

The transitive reduction has size $O(n)$, but any algorithm must spend $\Omega(n^2)$ comparisons to identify $r$ and $b$. Thus the ``cheapest proof'' is too strong a benchmark. We now present our neighborhood-based approach for bipartite sorting. This approach is general, and in the Appendix~\ref{app:framework} we show how this neighborhood-based approach fits several works on instance-optimal algorithms for static problems, namely the works on adaptive sorting \cite{AdaptiveCastroWood92}, on set intersection, union and difference \cite{demaine2000adaptive}, and the recent work on universal optimality \cite{haeupler2023universal}. 

We start with small neighborhoods and gradually increase them until there is no immediate obstruction to instance-optimality.
Define $\N_{\A}(\vv{G})$ as the set of DAGs (Automorphic) isomorphic to $\vv{G}$ if all edges in $\vv{G}$ are unit-cost, and cost-isomorphic\footnote{Meaning there exists an isomorphism btween the DAGs that preserves the costs.} otherwise. Next, define the runtime of an algorithm on an instance $\vv{G}$ as its maximum comparison-cost on any instance in $\N_{\A}(\vv{G})$. Define $OPT_{\A}(\vv{G})$ as the smallest comparison-cost of any algorithm. 

For unit-cost sorting, $\vv{G}$ is a complete DAG, and $|\N_{\A}(\vv{G})|=n!$. 
It is easily seen that now $OPT(\vv{G}) =\Omega(n \log n)$, and the unnecessary $\log n$ gap arising from comparing to the cheapest proof vanishes. This is also consistent with the fact that any $O(n \log n)$ algorithm that ignores the sequence of keys and only treats them as a set is $O(1)$ instance optimal in the order-oblivious setting.\footnote{All algorithms that exploit certain presortedness in the input necessarily exploit the input sequence of keys. This corresponds to a smaller neighborhood than the $n!$ size automorphism neighborhood.}

Moving on to the case when $\vv{G}$ is not complete, we see that the above definition is not sufficient by the following observation. Consider the case when $\vv{G}$ is a complete bipartite graph, with all edges going in the same direction, i.e., from one partition $R$ to the other $B$. Now $|\N_{\A}(\vv{G})|=1$, and any algorithm that knows that it is operating on $\vv{G}$ has zero comparison cost\footnote{Recall that we do not charge to write down the transitive reduction which has size $O(|R||B|)$, only the query cost.}. However, any instance-unaware algorithm needs $\Omega(|R||B|)$ comparisons to verify that the instance is indeed $\vv{G}$. This suggests that we need a larger neighborhood than just $\N_{\A}(\vv{G})$. 

Let $\vv{E}$ denote the set of edges in the transitive reduction of $\vv{G}$, also called essential edges (Definition~\ref{def:transitiveReduction}). Define $\N_{E}(\vv{G})$ as the set of DAGs that differ from $\vv{G}$ in exactly one essential edge being flipped, and any other changes it may imply. Again, define the runtime of an algorithm on an instance $\vv{G}$ as its maximum comparison-cost on any instance in $\N_{E}(\vv{G})$, and define $OPT_{\A}(\vv{G})$ as the smallest comparison-cost of any algorithm. It is straightforward now to observe that if $\vv{G}$ is sortable, $OPT(S) \geq \text{cost}(\mathcal{H})$ and if $\vv{G}$ is not sortable, $OPT(S) \geq \text{cost}(TR(\vv{G}))$. \textit{Thus, we recover both definitions of competitive ratio by considering this one-flip neighborhood.} 

The algorithms we consider here are randomized, and we hence want a definition of instance optimality that allows for randomization.
The notion of ``instance optimal in the random-order setting'' of~\cite{afshani2017instance}, based on Yao's principle, compares implicitly to the optimal expected running time of a correct randomized algorithm (Las Vegas style). Our following definition does this directly:

\begin{definition}[$\alpha$-Instance Optimality]\label{introdef}
Let $\N(\vv{G})= \N_{E}(\vv{G}) \cup \N_{\A}(\vv{G})$. 
Define the runtime $T_{A}(\vv{G})$ of a randomized algorithm $A$ on an instance $\vv{G}$ as its expected comparison-cost on instance~$\vv{G}$. %
Let $\C(\vv{G})$ be the set of randomized (Las Vegas) algorithms that are correct for all instances in~$\N(\vv{G})$.
Define $\OPT(\vv{G}) = \inf_{A\in \C(\vv{G})} \max_{\vv{G}'\in \N(\vv{G}) } T_{A}(\vv{G}')$. For some $\alpha \geq 1$, an algorithm $A$ is called $\alpha$-instance optimal if for every instance $\vv{G}$, $T_{A}(\vv{G}) \leq \alpha \OPT(\vv{G})$.
\end{definition}

\subsection{Lower Bounds on OPT}\label{subsec:lower}

We first refine our notion of instance-optimality to make it more amenable to deriving lower bounds.

\begin{definition}[Instance Optimality Distribution]\label{instDist}
Let $\C'(\vv{G})$ be the set of deterministic algorithms that are correct for all instances in~$\N(\vv{G})$.
Let $\D(\vv{G})$ be the uniform distribution over $\N_{\A}(\vv{G})$.
Define $\OPT(\vv{G}) = \inf_{A\in \C'(\vv{G})} \mathbb{E}_{\vv{G}' \sim \D(\vv{G}) } T_{A}(\vv{G}')$.
\end{definition}

An application of Yao's principle \cite{yao77} shows that for any $\vv{G}$: $\OPT_D(\vv{G}) \le \OPT(\vv{G})$. Note that the optimal algorithm is allowed to depend on~$\I$. We remark that while the following lower bounds would be easy to prove for algorithms unaware of $\I$ using adversary arguments, we prove this for $OPT$ (Definition~\ref{instDist}), which requires some extra care.

\begin{restatable}[Transitive Reduction or Verification lower bound]{lemma}{lemverification}\label{lem:verification}
    Let~$\I$ be an instance of bipartite sorting, let $K\subset \vec E_{\I}$ be its transitive reduction, %
    and define~$C_V=|K|$.
    Then, any algorithm that is correct for all inputs from $\N(\I)$ must perform at least $C_V$ comparisons.
\end{restatable}

\begin{proof}
    Assume there exists an algorithm~$\A$ that is correct for all instances in $\N(\I)$ simultaneously that performs at most $C_V-1$ comparisons on input~$I$. 
    This means that there must exist an edge~$e$ on the transitive reduction that is not verified by~$\A$.
    As $\A$ is deterministic, it would report~$\I$ as output even when the input had~$e$ flipped because all other edges have the same direction as in~$\I$, as we will argue now:
    If this (non-flipped) edge is between two stripes of size one, the two endpoints merge into two other stripes, but no edges changes direction.
    If this edge is between two stripes and both of them have size at least 2, then we create one additional inversion without changing other directions.
    If one stripe has size one and the other has size at least two, one element is moved from the size-at-least-two stripe to a neighboring stripe.  
    Again, no other bichromatic edges (implied or not) change their direction. 
\end{proof}

\noindent While the above lower bound is natural as it is the cheapest proof, Definition~\ref{instDist} now allows for the following lower bound that captures the complexity of instances where transitive reduction is too weak a measure (recall the instance in Definition~\ref{def:one-flip}).

\begin{restatable}[Inversion finding lower bound]{lemma}{leminvLB}
    \label{lem:invLB}
    Let~$\I$ be an instance of bipartite sorting with $n\ge2$ red and $m\ge2$ blue elements, where not all comparisons come out the same, and define
    \[
        C_I= \frac{nm}{\min(|\{(r,b) \in R\times B\mid r<b\}|,|\{(r,b) \in R\times B\mid r>b\}|)}.
    \]
    Under the distribution of Definition~\ref{instDist}, any deterministic algorithm~$\A$ that does at most $C_I/2$ comparisons must fail with probability (at least) 1/8.
\end{restatable}

\begin{proof}
    Let~$\D$ be the uniform distribution over~$\N_A(\I)$. 
    Remember that $\N_A(\I)$ contains all instances where the stripes are internally arbitrarily permuted.
    Observe $|\N_A(\I)|\ge 4$ by the bounds on $n$ and $m$.
    W.l.o.g., assume $|\{(r,b) \in R\times B\mid r<b\}| < |\{(r,b) \in R\times B\mid b<r\}|$, i.e., 
    the usual outcome of a comparison is $b<r$ and the inversion is $r<b$.
    Let $p=1/C_I = |\{(r,b) \in R\times B\mid r<b\}|/nm \le 1/2$  be the probability that a randomly chosen pair of elements is an inversion. 
    By Yaos principle,
    let~$\A$ be a deterministic algorithm and think of it as a decision tree~$\T$ where nodes are red-blue comparisons and non-inversion go to the left, inversions go to the right.
    Each leaf of the tree is marked with an output (that declares which instance was represented by the input), or a failure output.
    
    \noindent\textbf{Claim:}
    Let~$v_k$ be the node~$v$ that is reached by~$k$ comparisons returning ``non-inversion'' (i.e. the leftmost node of~$\T$ at depth~$k$).
    When input is drawn from~$\D$, the node~$v_k$ is reached with probability at least~$1-kp$.
    
    \noindent\textit{Proof of Claim:} 
    An input reaches~$v_k$ if $k$ (potentially dependent) random experiments all came out as ``non-inversion'', each having a probability $1-p$. 
    The claim follows from a union bound over the fail events.\qed

    From the claim it follows that if $v_k$ is a leaf for $k\le C_I/2$, the algorithm must fail with probability at least 1/4: Then $kp\le C_I/2\cdot 1/C_I=1/2$ and $1-kp\ge 1/2$, so half of the inputs reach~$v_k$. 
    Because there are at least four inputs, at least two reach~$v_k$, but it can only be labeled with one, the other(s), which stand for at least 1/4 of the inputs in $\N_{A}(I)$, make the algorithm fail.
\end{proof}

Finally, we will need to combine various lower bounds from different subproblems. Let $\I$ be an instance and $ (S_1,\cdots,S_k)$ its stripes (see Figure~\ref{fig:stripes}). 
Consider pairs of indices $(a_1,b_1), \cdots, (a_{\ell},b_{\ell})$, where for all $1 \leq j \leq \ell$, $a_j$ and $b_j$ both belong to $\{1,\cdots,k\}$, and $a_j<b_j< a_{j+1}<b_{j+1}$ for all $j<\ell$.
For $1\leq j \leq \ell$ define the subinstance $\I_{j}$ by the subgraph of $\vec G_\I$ on the vertices~$V_j=\bigcup_{i=a_j}^{b_j} S_{i}$. 

\begin{restatable}[Decomposition into Lower Bounds for Subproblems]{lemma}{lemdecomposition}
\label{lem:decomposition}
    For $1 \leq j \leq \ell$, let $\I_j$ be a subinstances of $\I$ as above. 
    Then $\OPT_D(\I) \geq \sum_{j=1}^{\ell} \OPT_D(\I_{j})$.
\end{restatable}

\begin{proof}
    As we are working with Definition~\ref{instDist}, we first have to check that an algorithm that is correct on $\N(\I)$ is actually correct on each $\N(\I_j)$.
    To this end observe that every edge flip in $\I_j$ is also an edge flip in $\I$, and that any permutation of the labels/names in $\I_j$ is also a permutation in $\I$.

    Run the algorithm~$A$ on an instance~$\I'$ drawn from~$\D(\I)$, and let~$X$ be the random variable describing the number of comparisons of~$A$.
    Define the random variables~$X_j$ to be the number of comparisons between vertices in subproblem~$\I'_j$.
    Then $\sum X_j \le X$. 
    
    Let's conceptually draw $\I'$ from $\D(\I)$ by first finding a position in the input list (name) for all vertices not in $\I_j$ and finally draw names for the vertices in $\I_j$.
    Now we can think of~$A$ as deterministic algorithm for $\I_j$ by considering all comparisons not in~$\I_j$ as fixed, and we get $E[X_j]\ge \OPT_D(\I_j)$.
    The statement of the lemma follows by linearity of expectation.

\end{proof}

\subsection{InversionSort and its $O(\log^3 n)$ instance-optimal guarantee}\label{subsec:inversionsortanalyze}

A generic state of InversionSort will be defined using a \emph{backbone}, which is a sequence of elements of alternating colors, called \emph{representatives} or \emph{pivots}. 
Each representative will be assigned a \emph{bucket}, which is a set of elements of the same color that lie between the two neighboring representatives of the other color on the backbone.

InversionSort makes progress from one state to the next by performing three steps: a) finding an \emph{inversion} (which is defined soon)
between neighboring representatives on the backbone, b) inserting this inversion on the backbone, and c) pivoting with these two elements, thereby refining the buckets. 

\subsubsection{Description of InversionSort}
\begin{algorithm}
    \caption{Algorithm InversionSort}
    \label{alg:InversionSort}
    \begin{algorithmic}
        \REQUIRE elements $R$ red, $B$ blue
        \STATE create trivial backbone $\mathcal{B}$ from $R$ and $B$, see Definition~\ref{def:backbone}
        \STATE $\eta \leftarrow 0$
        \WHILE{there is an active subproblem (see Definition~\ref{def:active}) in $\mathcal{B}$}
        \STATE $\eta \leftarrow \eta+1$
        \FOR{each active (see Definition~\ref{def:active}) bucket $s$}
            \STATE Sample one element~$x_s$
        \ENDFOR
        \FOR{each active subproblem between buckets $s$ (left), and $q$ (right)}
            \STATE Test for inversion between $x_s$ and $x_q$
        \ENDFOR
        \FOR{each active subproblem $X_i,X_{i+1}$ where $\eta$ - mark (age) > $|X_i||X_{i+1}|$ }
        \STATE do all comparisons between $X_i$ and $X_{i+1}$
        \STATE update the backbone and certificates accordingly
        \STATE the subproblem is finished, i.e. no longer active
        \ENDFOR 
        \FOR{each found inversion}
        \STATE update the backbone, including splitting buckets and resampling pivots %
        \STATE mark new subproblems with round $\eta$ as age
        \ENDFOR
        \ENDWHILE
    \end{algorithmic}
\end{algorithm}

\begin{definition}\label{def:backbone}
[Backbone, Representatives, and Buckets]
The backbone consists of a totally ordered, alternating list of \emph{representatives}
$(u_0,u_1,u_2,\ldots,u_{2k},u_{2k+1})=(r_0,b_1,r_2,\ldots,r_{2k},b_{2k+1})$,
where $r_{2i} \in R$ and $b_{2j+1} \in B$ 
with $r_{i}<b_{i+1}$ and $b_i <r_{i+1}$.
Here, $r_0$ is an artificial red element that is smaller than all elements, and the last element $b_{2k+1}$ is an artificial blue element that is larger than all elements.
The representatives define the \emph{buckets}
$(X_0,X_1,X_2,\ldots,X_{2k},X_{2k+1})=(R_0,B_1,R_2,\ldots,R_{2k},B_{2k+1})$ by 
$R_i = \{ x\in R \mid b_{i-1} < x < b_{i+1} \} \setminus \{r_{i}\}$
and 
$B_i = \{ x\in B \mid r_{i-1} < x < r_{i+1} \} \setminus \{b_{i}\}$.
Here, as a convention, the representative is not included in the bucket. Again, 
$R_0 = \{ x\in R \mid x < b_{1} \}$
and
$B_{2k+1} = \{ x\in B \mid r_{2k} < x \}$
are special cases.
\end{definition}

\begin{definition}\label{def:active}
    [active subproblems and buckets]
    As long InversionSort did not create a certificate that there are no further inversions between $x_i$ and $x_{i+1}$, the \emph{subproblem} defined by the buckets $X_i$ and $X_{i+1}$ is called active, and so are $X_i$ and $X_{i+1}$.
\end{definition}

Our previous work \cite{bichromaticITCSarxiv} now defines an \emph{inversion}, which gives the algorithm its name. 
Consider adjacent representatives $u_i$ and $u_{i+1}$, their corresponding adjacent buckets~$X_i$ and $X_{i+1}$, and a bichromatic pair $(x,y)$ of elements $x\in X_i$ and $y\in X_{i+1}$.
Observe that $x$ and $y$ are not ordered by transitivity of the backbone.
Because $x$ and $y$ are of different color, they can be compared.
If $y<x$, the pair is called an \emph{inversion}. This allows one to extend the backbone: we get
$u_i<y<x<u_{i+1}$, which is a chain of actual comparisons between elements of alternating color. %

\noindent \textit{The only way to find an inversion in a bipartite setting (this is where the bichromatic setting is different) is to uniformly at random, from all pairs in $X_i$ and $X_{i+1}$, pick $x$ and $y$}.
If the fraction of inversions is $p$, then the probability of finding an inversion is~$p$ and the expected number of trials to find one is~$1/p$.

InversionSort starts by (trivially) having the backbone consist only of the artificial smallest red element~$r_0$ and largest blue element~$b_1$, and $R_0=R$ and $B_1=B$. For a given backbone $(u_0,u_1,u_2,\ldots,u_{2k},u_{2k+1})=(r_0,b_1,r_2,\ldots,r_{2k},b_{2k+1})$, InversionSort first, for each pair $X_{i},X_{i+1}$ of adjacent buckets that have not yet found an inversion or a 
proof that there is no further inversion (i.e., reached the adjacent-stripe verification bound), in a round-robin manner, does one round of inversion-searching by randomly comparing pairs of elements in adjacent buckets.
If this leads to an inversion, the inverted pair is saved and the algorithm moves to the next pair of adjacent buckets. At the end of the round, all identified inversions are considered and used to extend the backbone.
Then InversionSort splits existing buckets by pivoting with new elements on the backbone. 
Because there is at most one pair of inversions between each two neighboring representatives on the backbone, each element is compared to at most two new representatives in each round.

This reestablishes the backbone and creates some new pairs of neighboring buckets, for which InversionSort initializes the inversion finding procedures. The algorithm stops once all neighboring pairs of buckets are shown to not have an inversion, i.e., the comparisons in the verification bound between neighboring stripes have been performed.

\noindent\textbf{Analysis.} \cite{bichromaticITCSarxiv} visualizes a run of InversionSort as a ternary (refinement) tree, where nodes correspond to subproblems. 
For an internal node~$v$, there is a corresponding subinterval on the backbone defined by two consecutive pivots, say a blue pivot followed by a red pivot, $b_{v} < r_{v}$.  If InversionSort finds an inversion~$y<x$ ($x$ is blue and $y$ is red) between~$b_v$ and $r_v$, then~$v$ has three children with the respective pivots $(b_v,y)$, $(y,x)$, $(x,r_v)$. 

The random nature of the inversion searching of InversionSort, is made precise in Lemma~11 in \cite{bichromaticITCSarxiv}. However, a stronger version of this lemma applies for bipartite setting with the same proof.

\begin{lemma}[Randomness in Inversion Finding, Stronger Version of \protect{\cite[Lemma 11, p11]{bichromaticITCSarxiv}}]\label{lem:uniform}
    At any stage of the InversionSort, consider a successful inversion finding procedure, which finds an inversion $y<x$ between representatives $u_i<y<x<u_{i+1}$. 
    Say, w.l.o.g., that $u_i$ is red and $u_{i+1}$ is blue, and hence $x$ is red and $y$ is blue.
    \begin{enumerate}
        \item for any~$y\in X_{i+1}$, conditioned on $y$ being in the inversion, $x$ is uniformly distributed among all the red elements in $R_y = \{x \in X_i \mid y<x<u_{i+1}\}$, an. 
        \item for any~$x\in X_{i}$, conditioned on $x$ being in the inversion, $y$ is uniformly distributed in $B_x=\{y \in X_{i+1} \mid u_i<y<x\}$.
    \end{enumerate}
    \end{lemma}

This gives a bound on the height of the tree.

\begin{lemma}[Height of the refinement tree \protect{\cite[Theorem 5, p11]{bichromaticITCSarxiv}}]
    \label{thm:height}
    Let $\T$ be the refinement tree of running InversionSort on an instance~$\I$ with $N=n+m$ elements.
    With high probability in~$N$, the height of~$\T$ is $O(\log N )$.
\end{lemma}

\noindent\textit{Handling Overlaps.} Because of the overlapping nature of the problem, InversionSort cannot easily focus on elements between neighboring representatives.
For example, for the child indicated by pivots $(y,x)$, instead of only getting the reds and blues that actually lie in this range as input, InversionSort instead has to also work with the red elements contained in $(b_v,y)$ and the blue elements inside $(x,r_v)$. This ``spill-in'' from the neighboring subintervals on the backbone needs to be analyzed.

As is argued in~\cite{bichromaticITCSarxiv}, the cost of bichromatic inversion search procedure of InversionSort is justified by the subinstance (part of the Hamiltonian) between the neighboring elements on the backbone. 
In Section~\ref{sec:instOptInvSort}, we will analyse this in the bipartite setting using the notion of instance optimality.
However, if the spill-in for this subproblem is too large, inversion search is too costly.
Hence, \cite{bichromaticITCSarxiv} identify subproblems that do not have too much spill-in from their neighbors, and call these subproblems \emph{unaffected}. 
Inversion search in unaffected subproblems can be charged to their subinstance. 
More precisely, \cite{bichromaticITCSarxiv} show that 
at any time, with high probability, at least roughly a $1/(\log N)^2$ fraction of all current problems are unaffected. Accounting over the whole tree, including the pivoting, introduces another $\log N$ factor corresponding to the depth of the tree. 

\subsubsection{Putting everything together: Proof of Theorem~\ref{thm:bipartiteinversionsort}}\label{sec:instOptInvSort}

We now complete the proof of our main result for bipartite sorting, restated here for convenience.

\bipartitetheorem*

We will show how to charge the comparisons performed by InversionSort to the lower bounds presented in Section~\ref{subsec:lower}.
First, Lemma~\ref{lem:uniform} and Lemma~\ref{thm:height} imply that the refinement tree height $h=O(\log n)$ and hence the pivoting cost is~$O(n\log n)$.
Second, the considerations about unaffected subproblems remains valid, %
there are at most $O(\log^2 n)$ affected subproblems per unaffected subproblem.
The inversion searching cost in each unaffected subproblem is justified by Lemma~\ref{lem:invLB} if an inversion is found, otherwise by Lemma~\ref{lem:verification}.
Adding the extra $O(\log \mathcal{N})$ factor (height of the tree), the next lemma completes the proof of Theorem~\ref{thm:bipartiteinversionsort}.

\begin{lemma}
    \label{lem:treeAccounting}
    Let $\T$ be the refinement tree of height~$h$ for a run of InversionSort on instance~$\I$, and let~$V_\T$ be the set of nodes of~$\T$.
    Then
    \[
    \sum_{v\in V_\T} \OPT\left( \I_v \right) \le 2h\OPT(\I)
    \]
\end{lemma}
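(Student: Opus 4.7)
The plan is to bound the sum one depth at a time and then add up. Letting depth$(v)$ denote the distance of $v$ from the root, I would write
\[
\sum_{v\in V_\T} \OPT(\I_v) \;=\; \sum_{d\ge 0}\; \sum_{\substack{v\in V_\T\\ \text{depth}(v)=d}} \OPT(\I_v),
\]
and prove that for each fixed $d$ the inner sum is at most $2\OPT(\I)$. Summed over the $h$ (or $h+1$, the constant is absorbed) nonempty levels, this yields the claimed bound $2h\OPT(\I)$.

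For a fixed depth $d$, enumerate the nodes of that depth from left to right along the final backbone as $v_1,\ldots,v_k$. Each $v_i$ carries a pair of pivots $(b_{v_i},r_{v_i})$ that were once neighbors on the backbone, and $\I_{v_i}$ is the subinstance defined by the stripes of those two pivots together with all stripes strictly between them. The key structural observation is that, although consecutive nodes $v_i$ and $v_{i+1}$ at depth $d$ may ``touch'' by sharing their boundary pivot stripe, non-consecutive ones are strictly separated: between the stripe range of $v_i$ and the stripe range of $v_{i+2}$ there lies the entire stripe range of $v_{i+1}$, which contains at least one interior stripe. Consequently, writing the stripe range of $v_i$ as $[a_i,b_i]$ in the numbering of the stripes of $\I$, the odd-indexed subinstances satisfy $a_1<b_1<a_3<b_3<a_5<\ldots$, the exact hypothesis of Lemma~\ref{lem:decomposition}. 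The same holds for the even-indexed ones.

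Applying Lemma~\ref{lem:decomposition} separately to the odd- and the even-indexed subsequences gives
\[
\sum_{i\text{ odd}}\OPT(\I_{v_i})\;\le\;\OPT(\I),\qquad \sum_{i\text{ even}}\OPT(\I_{v_i})\;\le\;\OPT(\I),
\]
so the total contribution of depth $d$ is at most $2\OPT(\I)$. Summing over depths then yields the lemma.

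The main obstacle is the structural claim that nodes at the same depth align on the final backbone with the alternating-disjoint pattern required by Lemma~\ref{lem:decomposition}. The subtlety is that InversionSort does not proceed layer by layer (Figure~\ref{fig:tree}), so the depth-$d$ nodes may arise at very different stages of the algorithm; nevertheless, since the refinement tree is a hierarchical subdivision of the fixed final backbone, pruning $\T$ at depth $d$ produces an in-order tiling of that backbone by subintervals whose endpoints are pivot stripes, and the depth-$d$ nodes occupy a subset of this tiling. Once that left-to-right ordering is in hand, the separation between $v_i$ and $v_{i+2}$ is immediate, and the decomposition lemma does the rest.
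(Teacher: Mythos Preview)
Your proof is correct and takes the same approach as the paper: bound each level of~$\T$ by $2\OPT(\I)$ via a bipartition of the level's nodes together with Lemma~\ref{lem:decomposition}, then sum over the $h$ levels. The only cosmetic difference is the bipartition used to obtain the factor~$2$: the paper splits the nodes of a level by \emph{polarity} (same-polarity subinstances at one level do not share vertices), whereas you split by parity of the left-to-right index; both work for the same structural reason, since any two same-group nodes at a given depth are separated by at least one full two-pivot subinterval.
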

\begin{proof}
    It suffices to show that for each level $\L$ of~$\T$ the inequality
    \( \sum_{v\in \L} \OPT\left( \I_v \right) \le 2 \OPT(I) \) holds.
    Note that the subinstances $\I_v$ of the same polarity in any~$\L$ do not share vertices.
    Hence the decomposition Lemma~\ref{lem:decomposition} is applicable and the lemma follows because there are two polarities.
\end{proof}

\section{Result on Sorting with Priced Information: Lower bound does not extend}\label{sec:generalsorting}

Recall that there exists an instance demonstrating the lower bound of $\Omega(n)$ on the competitive ratio of any algorithm that finds the \textbf{maximum} of a set of $n$ elements. Announced in the original paper by Charikar et al.\cite{charikar2002query}, this was spelled out one year later by Gupta and Kumar~\cite{gupta2001sorting}]. 
The instance is simple to describe (see Appendix~\ref{apx:maximum_lb}), and all comparisons in it have costs in $\{0,1,n,\infty\}$. Although this example was never formally stated for sorting, its discovery seems to have dampened efforts to study (either better algorithms, or lower bounds for) the general version of sorting with priced information in the past 20 years. 

In this section, we prove Theorem~\ref{mainresult1}, that shows that the $\Omega(n)$ lower bound for maximum with costs in $\{0,1,n,\infty\}$ cannot extend to sorting. 

\resultongeneralsorting*

While counterintuitive at a first glance (after all, the cost to sort is at least the cost to find the maximum),  the simple explanation is that the cheapest proof for sorting is also more expensive than that of the maximum. This opens up the problem of arbitrary-cost sorting once again - is there a $\Omega(n)$ lower bound for sorting with arbitrary costs, or can our $\tilde{O}(n^{3/4})$ algorithm be extended to an $o(n)$ competitive algorithm for arbitrary costs?

Theorem~\ref{mainresult1} is achieved by \emph{first developing an algorithm for DAG sorting}. Why consider the case of an unsortable DAG, when the DAG we have is sortable? Here is the reasoning. If we consider greedy algorithms for sorting with priced information, it is natural to try to discover as much of $\vv{G}$ as possible with low-cost edges\footnote{We remark that some time after we first uploaded a version of Theorem~\ref{mainresult1} to Arxiv, a preprint by [Jiang, Wang, Zhang and Zhang, Arxiv \cite{jiang2023algorithms}] was uploaded, where the authors also use an algorithm for DAG sorting (they call it GPSC) parameterized by $w$ and extend our $\widetilde{O}(n^{3/4})$ algorithm to obtain a $\widetilde{O}(n^{1-1/2W})$ competitive algorithm for sorting with priced comparisons with at most $W$ distinct costs. For our setting when $W=4$ they re-derive our result separately as their main theorem  would give a $n^{7/8}$-competitive ratio.}. However, note that the sub-DAG $\vv{G}_{\leq w}$ consisting of edges with cost at most $w$ in $\vv{G}$ may not be sortable, which is exactly the problem of DAG sorting.

We set up some notation first. Given an undirected complete graph $G$ with costs in $\{0,1,F,\infty\}$, let $\vv G$ denote the underlying DAG that contains a directed Hamiltonian path. Define ${\vv G}_{0}$ as the DAG obtained by revealing all cost $0$ edges, observe that it may not have a Hamiltonian path, and let $w_0$ be the width of ${\vv G}_{0}$. Similarly, denote by ${\vv G}_{01}$ the DAG obtained by revealing all cost $0$ and $1$ edges; ${\vv G}_{01}$ may not have a Hamiltonian path either, and let $w_{01}$ be the width of ${\vv G}_{01}$. Finally, let $k_{1}$ and $k_{F}$ be the number of cost $1$ and cost $F$ edges on the Hamiltonian path in $\vv G$.

\subsection{Algorithm details}

The following is our proposed algorithm for the $0,1,F,\infty$ cost version of sorting with priced information. Below, we will abbreviate the algorithm by Huang, Kannan and Khanna~\cite{6108244} for the $1,\infty$ setting, by HKK.

\begin{algorithm}
    \caption{Algorithm for $0,1,F,\infty$ cost}
    \label{alg:competitiveAlg}
    \begin{algorithmic}
        \REQUIRE undirected graph $G=(V,E,c)$ with costs $c(e) \in \{0,1,F,\infty\}$
        \ENSURE the total order (directed Hamiltonian)
        \STATE Probe all cost zero edges
        \STATE Run the following 4 algorithms in parallel, performing one comparison from each. If any of them discover the Hamiltonian, report the edges in the Hamiltonian path, and abort the other algorithms
        \STATE \qquad $\circ$ Set $F=\infty$. Run the HKK algorithm on the cost 1 edges, starting from $\vv{G_0}$ 
        \STATE \qquad $\circ$ Set $F=1$. Run the HKK algorithm on the cost $1$ edges, starting from $\vv{G_0}$ 
        \STATE \qquad $\circ$ Run Algorithm~\ref{alg:hamPred} on the cost 1 edges, starting from $\vv{G_0}$ 
        \STATE \qquad $\circ$ Find the 0-1 DAG using Theorem~\ref{get01dag}, use Algorithm~\ref{alg:hamPred} with cost $F$ edges on it.
    \end{algorithmic}
\end{algorithm}

The running time of the final algorithm will then be a minimum of four running times. We briefly explain the first two algorithms, and then explain in detail the last two. Recall that HKK algorithm runs when costs are $1$ or $\infty$. On an input with costs in $\{0,1,F,\infty\}$, the first algorithm pretends that cost $F$ edges are forbidden too, i.e., $F=\infty$, and probes whatever edges HKK would have probed from the cost 1 edges. Clearly, this will not find the Hamiltonian if it contains cost $F$ edges, as they aren't queried. However, in the case that the Hamiltonian does not contain cost $F$ edges, it will sort the input, and stops. The second algorithm does the opposite: it does not differentiate between cost $F$ and cost $1$ edges, and probes them if they would have been probed by the HKK algorithm. If run for long enough, this will find the Hamiltonian, and is stopped once it does so.

\subsubsection{Algorithm~\ref{alg:hamPred} : Hamiltonian By Predecessors}

The third algorithm in Algorithm~\ref{alg:competitiveAlg} is Algorithm~\ref{alg:hamPred}, which is also used as the second half of the fourth algorithm in Algorithm~\ref{alg:competitiveAlg}. This algorithm finds a Hamiltonian path in a partially revealed DAG. It utilizes Lemma~\ref{lem:hamiltonian_predecessor}, that generalizes binary search to searching for predecessors of a vertex in a DAG of width~$w$.  For two DAGs $D'$ and $D$ on the same set of vertices, we will write $D' \subset D$ if all the directed edges in $D'$ are also contained in $D$. 

\begin{restatable}[Hamiltonian by predecessor search]{lemma}{hamiltonianpredecessor}\label{lem:hamiltonian_predecessor}
    Let $D' \subset D$ be two DAGs on the vertex set $V$ and assume that $D$  contains a Hamiltonian path. 
    Assume that the Hamiltonian path in~$D$ contains a set $\vv S$ of~$k$ edges that are not in $D'$, and let $S$ be the undirected version of $\vv S$. Let $E$ be a set of edges that can be queried and assume $S \subset E$.
    Let~$w$ be the width of~$D'$. Then, $k+1 \ge w$ and
    the Hamiltonian in $D$ can be found with $O(wk\log n)$ queries on edges in $E$.
\end{restatable}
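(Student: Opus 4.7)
The plan is to prove the structural inequality $k+1 \ge w$ by Dilworth's theorem, and then to give an algorithm that rebuilds the Hamiltonian one chain segment at a time, spending $O(w\log n)$ queries per segment. For the inequality: removing the $k$ edges of $\vv S$ from the Hamiltonian of~$D$ partitions the vertex set into $k+1$ vertex-disjoint directed paths, each consisting entirely of edges of~$D'$; together these form a chain cover of~$D'$ of size $k+1$, and by Dilworth's theorem any chain cover of~$D'$ has size at least~$w$, so $k+1\ge w$.

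For the algorithm, I would first compute in polynomial time a minimum chain decomposition $C_1,\dots,C_w$ of~$D'$, and then reconstruct the Hamiltonian segment by segment; there are at most $k+1$ such segments, separated by the $k$ edges of~$\vv S$. I maintain a pointer $p_i$ into each chain $C_i$ marking its next unvisited vertex, and at the start of a new segment I am at some vertex $v \in C_j$ and need to determine both the extent of this segment inside $C_j$ and the vertex that starts the next segment. For each of the other $w-1$ chains $C_i$, I perform a binary/predecessor search inside $C_j$, using the current head $h_i := C_i[p_i]$ as pivot, to locate the largest position $q_i$ with $C_j[q_i]$ preceding $h_i$ on the Hamiltonian. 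The current segment then ends at $C_j[\min_i q_i]$, the winning chain supplies the start of the next segment via its head $h_i$, and the transition edge lies in $\vv S\subseteq E$ and is confirmed by a probe. This uses $O(w\log n)$ probes per segment, giving $O(wk\log n)$ probes overall.

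The main obstacle is that the binary-search comparisons are between a head $h_i$ of $C_i$ and interior vertices of $C_j$, and these edges are not a priori guaranteed to lie in~$E$. I would handle this by arguing inductively that at each step, the order between $h_i$ and $C_j[q]$ is either already implied by $D'$ together with the orientation information gathered so far (so no probe is needed), or else the required edge must lie in $\vv S$ (and hence in~$E$) because the binary search has converged to the true segment boundary. In other words, the $D'$-transitivity serves as a free oracle that handles the interior of each binary search, and the genuine probe happens only at the boundary, which lies in $E$ by hypothesis. Formalizing this invariant---that every issued probe is on an $E$-edge, and that the algorithm never commits to an incorrect segment boundary---is the technical crux of the proof.
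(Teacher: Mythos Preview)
Your proof of the inequality $k+1\ge w$ is correct and is the same Dilworth/pigeonhole argument the paper gives (phrased there contrapositively).

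The algorithmic part has a genuine gap. You implicitly assume that each Hamiltonian segment---a maximal subpath of the Hamiltonian whose edges all lie in $D'$---sits inside a single chain $C_j$ of your computed minimum chain decomposition, and from this you claim there are at most $k+1$ ``segments'' to process and that every transition edge lies in~$\vv S$. But the $k+1$ Hamiltonian segments and the chains $C_1,\dots,C_w$ are two \emph{unrelated} chain covers of $D'$; a Hamiltonian segment can weave through several of the $C_j$. Concretely, take the Hamiltonian $1\to2\to3\to4\to5\to6$ with $\vv S=\{(3,4)\}$ (so $k=1$) and add the edge $3\to6$ to $D'$. Then $w=2$ and $C_1=\{1,2,3,6\}$, $C_2=\{4,5\}$ is a minimum chain decomposition; the Hamiltonian visits $C_1$, then $C_2$, then $C_1$ again, so your walk has three ``segments'' even though $k+1=2$, and the transition $5\to6$ lies in $D'$, not in $\vv S$. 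In general the number of times the Hamiltonian switches between chains of the decomposition is not bounded by $O(k)$, so the $O(wk\log n)$ bound does not follow from your scheme. For the same reason your handling of the $E$-edge obstacle collapses: transition edges need not be in $\vv S$, and the interior binary-search comparisons between $h_i$ and vertices of $C_j$ have no reason to be either $D'$-implied or in~$\vv S$.

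The paper avoids walking the Hamiltonian entirely. It repeatedly locates the first layer, in a topological layering of the transitive reduction of the current $D'$, that contains at least two vertices; this layer is an antichain of size at most $w$, and for each vertex in it the predecessor-search subroutine (Lemma~\ref{lem:multi_pred_search}) is invoked at cost $O(w\log n)$. The accounting is that one such round of $|S|$ predecessor searches, costing $O(|S|\,w\log n)$ queries, uncovers at least $|S|-1$ of the $k$ missing Hamiltonian edges, so at most $2k$ predecessor searches are ever performed, giving $O(wk\log n)$ total.
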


Below is the pseudocode for Algorithm~\ref{alg:hamPred}. It uses in turn a predecessor searching subroutine that is captured by the following simple lemma. Lemma~\ref{lem:hamiltonian_predecessor}.

\begin{lemma}[Predecessor search in DAG]\label{lem:multi_pred_search}
    Given a DAG~$D'=(V,E')$ of width~$w$, and a vertex~$v\in V$, $|V|=n$, let $D$ be the DAG obtained by extending $D'$ by probing all edges involving $v$. Define $P_v= \{ u\mid (u,v) \hbox{ is in the transitive reduction of } D \}$. There exists an algorithm  that computes $P_v$ with $O(w\log n)$ queries, and runs in $O(n^2)$ time.
\end{lemma}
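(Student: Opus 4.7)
The plan is to decompose $D'$ into $w$ chains, perform a binary search in each chain to find the topmost element below $v$ in $D$, and then filter the resulting at-most-$w$ candidates to retain precisely those whose edge to $v$ is essential.

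By Dilworth's Theorem (Theorem~\ref{dilworth}), $D'$ admits a partition into $w$ chains $C_1,\ldots,C_w$, computable in polynomial time. Within a single chain $c_{i,1}<c_{i,2}<\cdots<c_{i,n_i}$, the predicate ``$c_{i,j}<v$ in $D$'' is monotone in $j$: if it holds at index $j$, chain transitivity in $D'$ gives it for every smaller index. Hence binary search locates the largest index $j_i^\star$ with $c_{i,j_i^\star}<v$ using $O(\log n)$ probes of edges incident to $v$, even when some chain elements are incomparable to $v$ in $D$ (a probe at $c_{i,\mathrm{mid}}$ that returns ``no edge'' rules out any $c_{i,j}<v$ with $j>\mathrm{mid}$, so the search recurses on the left half). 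Setting $u_i:=c_{i,j_i^\star}$ when this index exists, this step costs $O(w\log n)$ queries and yields a candidate set $U\subseteq V$ of size at most~$w$.

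I claim that $u_i\in P_v$ iff no element of $U\setminus\{u_i\}$ is reachable from $u_i$ in $D'$. For the forward direction, if $(u_i,v)$ is implied in $D$, pick any path $u_i=w_0\to w_1\to\cdots\to w_t=v$ with $t\ge 2$. Since $D$ and $D'$ differ only on edges incident to $v$, the prefix $u_i\leadsto w_{t-1}$ lies entirely in $D'$. The terminal edge $(w_{t-1},v)$ forces $w_{t-1}<v$ in $D$, placing $w_{t-1}$ in some chain $C_j$ with $w_{t-1}\le u_j$ by maximality of $u_j$; chain edges of $D'$ thus extend the prefix to a $D'$-path $u_i\leadsto u_j$. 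Assuming $u_j=u_i$ combines this path (and if needed the chain edges from $w_{t-1}$ up to $u_i$) into a directed cycle in $D'$, contradicting acyclicity; hence $u_j\ne u_i$. The converse is immediate, since any $D'$-path $u_i\leadsto u_j$ combined with the edge $(u_j,v)$ certifies that $(u_i,v)$ is implied.

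The primary subtlety, and the main technical step, is the filtering criterion: one must verify both that every alternative path routes through the candidate set $U$ — where the maximality of each $u_j$ within its chain, together with the fact that $D$ extends $D'$ only at $v$, is crucial — and that the candidate reached is distinct from $u_i$, which uses acyclicity of $D'$. Beyond this, the per-chain binary search is standard, the query accounting is the sum of $w$ logarithmic searches, and the $O(n^2)$ time bound follows by performing reachability from each of the at most $w\le n$ candidates in $D'$, exploiting the chain structure to avoid duplicating work across candidates.
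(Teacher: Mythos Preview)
Your proof is correct and follows the same approach as the paper: decompose $D'$ into $w$ chains via Dilworth's theorem and binary-search each chain for the maximal predecessor of~$v$, obtaining at most $w$ candidates. Your treatment is in fact more detailed than the paper's terse argument---in particular, you spell out the filtering criterion (reachability among candidates in $D'$) that reduces the candidate set to exactly $P_v$, which the paper leaves implicit.
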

\begin{proof}
    Observe that any chain in~$D'$ can contain at most one element of~$P_v$, and hence $|P_v| \le w$.
    Indeed, we can run one binary search on each of the $w$ chains in $D'$, leading to at most~$w$ candidate predecessors.
    The number of queries is easily seen to be $O(w\log n)$ after computing an optimal partitioning into chains in polynomial time.
\end{proof}

\vspace{2mm}\noindent\textbf{Proof of Lemma~\ref{lem:hamiltonian_predecessor}} 
    By Dilworth's Theorem, $D'$ can be partitioned into~$w$ chains.
    To show $k+1 \ge w$, assume otherwise, $w > k+1$, and let~$A$ be $k+2$ non-comparable vertices in~$D'$.
    By the pigeonhole principle, there must be two vertices of~$A$ in the same of the $k+1$ stretches of cost 0 edges on the Hamiltonian, a contradiction to them being incomparable.

    To prove that Algorithm~\ref{alg:hamPred} performs at most $O(kw\log n)$ queries,
    observe that adding edges to~$D'$ does not increase its width.
    In the while loop, as long as~$D'$ is not the Hamiltonian path, let~$S$ be the first layer of a BFS traversal of the transitive reduction of~$D'$ with $|S|\ge 2$, and observe that $S$ is an antichain and hence $|S|\le w$.
    All vertices of $S$ but one have their incoming edge on the directed Hamiltonian not yet revealed / queried: 
    Assume there are two vertices $b \neq d \in S$ and their predecessors $a\prec b$ and $c\prec d$ are both already in $D$.
    Then, $b$ and $d$ are not sources in $D$, and hence $S$ must be the set of successors of a vertex~$v$.
    Additionally, there is only a single source $s$ in~$D$, and the set $\{x\mid x < v\}$ forms a chain in the transitive reduction of~$D$ starting in~$s$.
    This means $w\le a < b$ and $w\le c < d$ contradicting them being different.

    By the above arguments, $D$ contains the Hamiltonian with only $k-|S|+1$ unrevealed edges missing.
    We used $O(|S|w\log n)$ queries to reduce the number of unrevealed edges by $|S|-1$ for $|S|\ge 2$, hence each search creating a missing edge of the Hamiltonian, and this search must justify at most one additional such search.
    Hence, the total number of queries to arrive at the Hamiltonian is~$O(wk\log n)$. 

Binary searching for a vertex $v$ into one of the $w$ chains takes $O(\log n)$ probes, and in $O(w \log n)$ probes one is sure to have at least discovered one edge from the Hamiltonian, namely the incoming edge to $v$. This can then be repeated $k$ times, revealing the Hamiltonian.
    
    \begin{algorithm}
        \caption{Hamiltonian By Predecessors}
        \label{alg:hamPred}
        \begin{algorithmic}
            \REQUIRE undirected $G=(V,E)$ defining which comparisons are allowed
            \REQUIRE DAG $D$ of already probed edges (initially the cost 0 edges)
            \ENSURE The updated $D$ contains a directed Hamiltonian
            \STATE create (and maintain) a chain decomposition $C$ of the transitive reduction of $D$
            \WHILE{$D$ has width $w$ >1, i.e. is not the intended result}
                \IF{The transitive reduction of $D$ has several sources}
                    \STATE Let $S$ be the set of these sources
                \ELSE
                    \STATE Let $v$ be the lowest vertex with more than 1 successors
                    \STATE Let $S$ be the set of successors of~$v$
                \ENDIF
                \STATE 
                \FOR{each $u$ in $S$}
                    \STATE Find all predecessors of $u$ in $D$ (Lemma~\ref{lem:multi_pred_search}), adding answers to $D$ 
                    \STATE
                    {\hfill $\backslash\backslash$ there are at most $w$ such predecessors}
                \ENDFOR
            \ENDWHILE
        \end{algorithmic}
    \end{algorithm}

\subsubsection{The fourth algorithm in Algorithm~\ref{alg:competitiveAlg}}

We will develop another algorithm, that proceeds in two steps: a) compute only the 0-1 DAG, ${\vv G}_{01}$, and b) find the cost $F$ edges ($k_F$-many of them) on the Hamiltonian path. Step b) is performed using Algorithm~\ref{alg:hamPred}. If $k_F=0$, step a) recovers the complete Hamiltonian path. Before we state the DAG sorting algorithm for step a), we note that it only needs to output the transitive reduction of ${\vv G}_{01}$.

\begin{restatable}{theorem}{generalizednthreehalf}\label{get01dag}
    There is a poly-time randomized algorithm that w.h.p. solves DAG sorting for an instance $\vec{G}$ with edge costs in $\{0,1,\infty\}$, using $O(\min(wn^{3/2}\log n,n^2))$ comparisons, where~$w$ is the width of~$\vec{G}$.
\end{restatable}

We defer the proof of Theorem~\ref{get01dag} for now and analyze our algorithm assuming it.

\subsection{Analysis of Algorithm~\ref{alg:competitiveAlg}}

\begin{lemma}
    
Algorithm~\ref{alg:competitiveAlg} incurs the following costs
\[ \begin{cases} 
      O(\min(n^{1.5}\log n \,,\, w_0k_1\log n)) & \text{if } k_F = 0 \\
      O(\min(Fn^{1.5}\log n\,,\,w_{01}n^{1.5}\log n+Fw_{01}k_F\log n)) & \text{if } k_F > 0
   \end{cases}
\]
\end{lemma}

\begin{proof}
    If $k_F=0$, the first algorithm that ignores cost $F$ edges by setting $F=\infty$ never probes a cost $F$ edge, and finishes in $O(n^{1.5}\log n)$ comparisons (this is the cost of the algorithm by Huang, Kannan and Khanna~\cite{6108244}). Since the DAG formed by cost 0 edges has width $w_0$ and $k_F=0$, $w_{0} \leq k_1+1$. The third step running  Algorithm~\ref{alg:hamPred} finishes after at most $O(w_0k_1 \log n)$ comparisons, by Lemma~\ref{lem:hamiltonian_predecessor}.

    If $k_F >0$, the first term comes from running HKK after setting $F=1$: the true cost of probing an edge is at most a factor $F$ larger. Finally, step 4 of Algorithm~\ref{alg:competitiveAlg} runs the algorithm in Theorem~\ref{get01dag} first, incurring at most $w_{01}n^{1.5}\log n$ many comparisons. With the 0-1 DAG obtained using Theorem~\ref{get01dag}, Algorithm~\ref{alg:hamPred} now inserts at most $k_F$ many edges in the Hamiltonian, probing at most $w_{01}\log n$ many edges for each. Every probe costs $F$, for a total of $w_{01}n^{1.5}\log n+Fw_{01}k_F\log n$.
\end{proof}

\subsubsection{Proof of $\tilde{O}(n^{3/4})$ competitive ratio of Algorithm~\ref{alg:competitiveAlg}}

We claim that the competitive ratio is always bounded by $O(n^{3/4}\log n)$. Observe that the cost of the Hamiltonian is $k_1+Fk_F$. 
If $k_1=k_F=0$, the Hamiltonian has a cost of 0 and our algorithm finds it at cost 0. 
From now on, we assume not both of $k_1$ and $k_F$ are $0$.

Consider the case $k_F=0$ first. Note that this implies that the width $w_{01}$ of $\vv G_{01}$ is $1$. First consider the subcase when $w_0 \leq n^{3/4}$. 
In this case, the competitive ratio is bounded above by $O(w_0 k_1 \log n) / k_1 = O(w_0 \log n) \leq O(n^{3/4}\log n)$. In the subcase when $w_0 > n^{3/4}$, observe that this implies that $k_1 \geq n^{3/4}$ which implies that the competitive ratio is bounded above by $O(n^{1.5}\log n)/k_1 \leq O(n^{3/4}\log n)$.

Next, consider the case $k_F \geq 1$, and the cost of the Hamiltonian is at least $Fk_F$. Since $w_{01} \leq k_F + 1$, the cost of the algorithm is at most $O(w_{01} n^{1.5}\log n + Fw_{01}k_F\log n) < O(k_F n^{1.5}\log n + Fw_{01}k_F\log n)$, and dividing by $FK_F$ (the lower bound on the cost of the Hamiltonian), we get  a competitive ratio of at most $O((n^{1.5}/F+ w_{01}) \log n)$. Since $F \geq n^{3/4}$, this ratio is $O(n^{3/4}\log n)$ as long as $w_{01} \leq n^{3/4}$. Else if $w_{01} > n^{3/4}$, we observe that $k_F \geq n^{3/4}$, and then the $Fn^{1.5}\log n$ query cost gives us a competitive ratio of at most $Fn^{1.5}\log n / F k_F \leq n^{3/4}\log n$. Thus Theorem~\ref{mainresult1} is proved.\qed

\noindent It remains to prove Theorem~\ref{get01dag}, which is the topic of the next subsection.

\subsection{Proof of Theorem~\ref{get01dag}}

First, observe that if the width of~$\vec G$ is at least~$\sqrt{n}/4$, the statement of Theorem~\ref{get01dag} is easy to achieve by probing all edges.
Hence, let us assume the width is at most~$\sqrt{n}/4$. We will show that there is an algorithm that computes $\vv G_{01}$ with cost at most $O(w_{01}n^{1.5}\log n)$. 
This algorithm will only probe cost $0$ and $1$ edges, and  will be a generalization of the algorithm in \cite{6108244}. Note that while the algorithm in \cite{6108244} works on a cost $\{1,\infty\}$ setting under the promise of a Hamiltonian path in the true graph, our algorithm finds the transitive reduction of the DAG $\vv G_{01}$ of width $w_{01}$.

\noindent  Now we address the challenges posed in extending the work by Huang Kannan and Khanna \cite{6108244}.

\vspace{2mm} \noindent\textbf{Challenges in extending the results of Huang, Kannan and Khanna \cite{6108244}}: At a high level, the algorithm in \cite{6108244} alternates between three ways of making progress: 

\noindent\textbf{1.} Finding and probing balanced edges, defined as those that reduce the number of possible linear extensions of the current DAG by a $1-(1/ (e \sqrt{n})))$ factor. Finding such edges requires approximating the average rank of vertices under all possible linear extensions at all stages of the algorithm.

\noindent\textbf{2.} After estimating the indegree of vertices upto an additive error of $\widetilde{O}(\sqrt{n})$ by an $\widetilde{O}(n^{1.5})$ sampling procedure, the algorithm probes free edges, defined as the set of edges $(u,v)$ where the average rank of $u$ is smaller than the average rank of $v$, and $v$ has most $\widetilde{O}(\sqrt{n})$ unprobed incoming edges. Free edges that are balanced again reduce the number of linear extensions by a constant factor. Otherwise, they can contribute at most $\widetilde{O}(n^{1.5})$ to the total cost.

\noindent\textbf{3.} Binary Search - When there are no free edges, there must exist a set of $\sqrt{n}$ vertices with known total order (Lemma 3.5 in \cite{6108244}). The other vertices can perform binary search into these $\sqrt{n}$ vertices at a cost of $O(n \log n)$, and doing so removes these $\sqrt{n}$ vertices from the picture. The total cost of binary search is therefore $\widetilde{O}(n^{1.5})$. 

\textit{The third step of the algorithm is the step that guarantees a reduction in the problem size. However,the third step of this algorithm no longer works for DAG sorting}: the existence of  a set of $\sqrt{n}$ vertices with known total order crucially relies upon the existence of the Hamiltonian path. 

\vspace{2mm}\noindent\textbf{Proof of Theorem~\ref{get01dag}} All of the definitions, algorithms, and accounting to estimate the in-degree of a vertex in \cite{6108244} remain valid and unchanged. Observe that any topological sorting of the underlying directed graph, together with the undirected graph, reveal the directed graph. 
Define the \textbf{average rank of a vertex} as the average rank over all linear extensions of the true underlying directed graph. 
The following result implies that the average rank~$r$ on a path in (the transitive reduction of) a DAG is increasing by at least one per edge.
\begin{lemma}\label{lem:avg_plus_one}
    Let~$D=(V,E)$ be a DAG and~$r\colon V\to \Qpos$ be the average rank.
    Then for $(u,v)\in E$ we have $r(u)+1\le r(v)$.
\end{lemma}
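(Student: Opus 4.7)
The plan is to reduce the statement to a pointwise inequality that holds for every individual linear extension, and then take expectations. Recall that a linear extension of $D=(V,E)$ is a bijection $\pi\colon V\to\{1,\ldots,|V|\}$ that respects the partial order induced by $D$, in the sense that $(x,y)\in E$ (or even just the existence of a directed path from $x$ to $y$) forces $\pi(x)<\pi(y)$. The average rank is $r(v)=\EXP_{\pi}[\pi(v)]$, where the expectation is uniform over all linear extensions of $D$.

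Given the edge $(u,v)\in E$, the first step is to observe that for every linear extension $\pi$ one has $\pi(u)<\pi(v)$. Since $\pi(u)$ and $\pi(v)$ are distinct positive integers, this already gives the pointwise bound $\pi(v)-\pi(u)\ge 1$. By linearity of expectation, averaging over the uniform distribution on linear extensions yields
\[
    r(v)-r(u)=\EXP_\pi[\pi(v)-\pi(u)]\ge 1,
\]
which is exactly the claim. The only thing to justify is that the set of linear extensions of $D$ is nonempty, but this follows from $D$ being a DAG (it has a topological sort), so the expectation is well-defined.

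There is essentially no obstacle here: the argument is just the integer gap $\pi(v)-\pi(u)\ge 1$ combined with linearity of expectation. The only conceptual point worth noting is that the inequality is tight whenever $u$ and $v$ are consecutive in every linear extension (e.g.\ when $(u,v)$ is the only edge of the transitive reduction and dominates every other ordering constraint involving those two vertices), which is why the lemma is stated as ``at least one per edge'' rather than strictly more. This tight case also motivates the use of this lemma in the follow-up argument, where it certifies that average-rank estimates are monotone along paths and therefore detect essential progress whenever an edge of the transitive reduction is revealed.
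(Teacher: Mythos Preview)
Your proof is correct and is essentially identical to the paper's own argument: both observe that $\pi(v)-\pi(u)\ge 1$ holds for every linear extension~$\pi$ and then average over all linear extensions (the paper writes the sum $\sum_{\pi\in\Pi}(\pi(v)-\pi(u))\ge |\Pi|$ and divides by~$|\Pi|$, which is exactly your linearity-of-expectation step).
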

\begin{proof}
    Let $\Pi$ be the set of all linear extensions that are compatible with~$D$, such that~$r(x)|\Pi| = \sum_{\pi \in\Pi} \pi(x)$.
    Then
    $|\Pi|(r(v)-r(u)) = \sum_{\pi \in\Pi} \pi(v)-\pi(u) \ge |\Pi|\cdot 1$.
\end{proof}

\begin{definition}[Convex vertex subset] 
In a DAG $G = (V,E)$, a subset of vertices $S \subseteq V$  is \emph{convex} if for every pair of vertices $u, v \in S$, every vertex $w$ on any directed path from $u$ to $v$ in $G$ is also in $S$.\footnote{Our definition of convexity differs from the definition in the metric graph theory (defined on undirected graphs), where convex subgraph contains the vertices of only the shortest paths between every pair of vertices~\cite{metric-graph-survey}.} %
\end{definition}

Hence, considering a subset of the vertices by an upper and a lower bound on the average rank, leads to a convex subset. Next, a vertex is \textbf{live} if there is an unprobed edge incident to it, otherwise it is \textbf{exhausted}. The \textbf{assumed graph} is the same directed graph as in \cite{6108244}.
An \textbf{active vertex} is one that has at most $4\sqrt n \log n$ unprobed in-edges in the assumed graph. A \textbf{free edge} is an unprobed edge~$(u,v)$ where the endpoint~$v$ is active. The proof of the next lemma is identical to that in \cite{6108244}.

\begin{lemma}[Generalization of Lemma~3.5 in \cite{6108244}]\label{lem:their35}
    The $\sqrt n$ live vertices with smallest average rank are all active.
\end{lemma}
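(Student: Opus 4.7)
The plan is: given $v$ among the $\sqrt n$ live vertices of smallest average rank, I would bound the number of unprobed in-edges of $v$ in the assumed graph by first passing to the true DAG~$\vv G_\I$ and then counting live ancestors there.

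The reduction to $\vv G_\I$ rests on the in-degree sampling scheme inherited from \cite{6108244}: with high probability, for every vertex the count of unprobed in-edges in the assumed graph and in $\vv G_\I$ differ by at most $\widetilde{O}(\sqrt n)$. This step is a concentration statement about uniformly sampling linear extensions of the currently revealed DAG, and is independent of whether $\vv G_\I$ has a Hamiltonian path.

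For the count inside $\vv G_\I$, any unprobed in-edge $(u,v)\in \vv E$ has $u$ live, because every edge incident to an exhausted vertex has already been probed. Moreover, Lemma~\ref{lem:avg_plus_one} gives $r(u)\le r(v)-1 < r(v)$, so $u$ is a live vertex of strictly smaller average rank than $v$. Since $v$ is among the $\sqrt n$ live vertices of smallest average rank, there are at most $\sqrt n -1$ such $u$, and hence $v$ has at most $\sqrt n -1$ unprobed in-edges in $\vv G_\I$. Combining with the previous paragraph, the number of unprobed in-edges of $v$ in the assumed graph is at most $\sqrt n -1 + \widetilde{O}(\sqrt n) \le 4\sqrt n \log n$, showing that $v$ is active.

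The main obstacle compared to the Hamiltonian version in \cite{6108244} is verifying that the in-degree estimator and its concentration bound remain valid after dropping the Hamiltonian promise; I expect this to go through, since the estimator only uses uniform samples from the set of linear extensions of the currently revealed DAG, which are well-defined for any DAG. Once that is in place, the live-ancestor count uses only Lemma~\ref{lem:avg_plus_one}, which is already stated for arbitrary DAGs, so the argument carries over unchanged from the Hamiltonian setting.
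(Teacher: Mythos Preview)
The paper gives no proof of its own; it defers to \cite{6108244}. The argument there is more direct than yours and does not pass through $\vv G_\I$ or invoke the in-degree sampling. It uses only the definition of the assumed graph: an unprobed edge $\{u,v\}$ is, by construction, oriented as $(u,v)$ in the assumed graph exactly when $r(u)<r(v)$. Hence any unprobed in-neighbour $u$ of $v$ in the assumed graph is live (the edge $\{u,v\}$ is unprobed) and already satisfies $r(u)<r(v)$ by definition. If $v$ is among the $\sqrt n$ live vertices of smallest average rank there are fewer than $\sqrt n$ such $u$, so $v$ is active. No detour is needed, and the Hamiltonian promise plays no role.

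Two remarks on your route. First, the in-degree sampling in \cite{6108244} is used elsewhere in the algorithm (to declare edges free), not for this lemma, so the ``main obstacle'' you flag is not actually at stake here. Second, your appeal to Lemma~\ref{lem:avg_plus_one} is fragile: that lemma applies to an edge of the DAG whose linear extensions define~$r$. In \cite{6108244} the average rank is taken over linear extensions of the \emph{currently revealed} DAG (that is what the algorithm can approximate), which by hypothesis does not contain the unprobed edge $(u,v)\in\vv E$; so Lemma~\ref{lem:avg_plus_one} tells you nothing about $r(u)$ versus $r(v)$ for such an edge. The direct argument sidesteps this entirely, since for an unprobed in-edge of the assumed graph the inequality $r(u)<r(v)$ holds by construction rather than as a consequence of any lemma.
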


\begin{restatable}{lemma}{theirthreesix}[Generalization of Lemma~3.6 in \cite{6108244}]\label{lem:their36}
    If there are no free edges, and the width of the underlying~$\vec G$ is at most~$\sqrt{n}/4$, then there exists a set $S$ of at least $(3/4)\sqrt n$ live vertices with known partial order who form a DAG of width at most~$\sqrt{n}/4$.
\end{restatable}

\begin{proof}
    Consider the set~$S$ of at most~$\sqrt{n}$ live vertices with smallest average rank.
    More precisely, we chose the largest upper bound on the average rank such that $|S|\le\sqrt{n}$.
    By Lemma~\ref{lem:avg_plus_one}, at most~$\sqrt{n}/4$ vertices can have the same average rank, such that $|S|\ge 3/4\sqrt{n}$.
    By Lemma~\ref{lem:their35}, all vertices of~$S$ are active.
    Let~$u,v\in S$ be a pair of vertices that have a directed path $P$ from~$u$ to~$v$ in $\vec G$.
    Then, all of this path~$P$ is in $S$, and all live vertices of~$P$ are in~$S$.
    Hence, because there are no free edges, and all vertices of~$P$ not in~$S$ are exhausted, all edges of~$P$ must be probed. 
    Hence, $S$ is convex.
    The statement on the width follows from a chain decomposition of~$\vec G$ remaining a chain decomposition for a convex subset of vertices.
\end{proof}

Note that Lemma~\ref{lem:their36} does not imply that the algorithm can, or should, identify precisely this set~$S$ defined in the proof.
Hence, the algorithm is going to approximate the smallest width subset of at least~$3/4 \sqrt{n}$ vertices among the live vertices.
More precisely, starting from an empty~$S$, it is going to iteratively find the longest (outside~$S$) chain among the live vertices (also using edges that are implied by transitivity via edges in~$S$).
It stops once~$S$ contains at least $3/4 \sqrt{n}$ vertices, and uses it as the DAG of small width in the setting of Lemma~\ref{lem:multi_pred_search}, and determine for every remaining live vertex its predecessors in~$S$, with~$O(w\log n)$ queries each, compared to the~$O(\log n)$ queries if there is a Hamiltonian.
Hence, the total number of queries increases from $O(n^{3/2}\log n)$ to $O(wn^{3/2}\log n)$, as claimed in Theorem~\ref{get01dag}, completing the proof.

\clearpage

\appendix
\section{Bipartite Sorting with Hamiltonian}
\subsection{Backbonesort: Quicksort Adaptation for Perfectly Interleaved Instance}\label{sec:backbonesort}

Consider the perfectly interleaved instance with $n=m$: the smallest element is a red, followed by a blue, followed by a red, and so on. 

\begin{definition}[Backbonesort]
Backbonesort has the same notion of a backbone as InversionSort.
It runs in rounds that alternates between pivoting with red and pivoting with blue.
In a round of pivoting with red (pivoting with blue is completely symmetrical), for each bucket of red elements, it chooses uniformly at random a red pivot.
Now, each blue bucket is (attempted to be) split with the two red pivots of the neighboring red buckets, splitting it into 1 (not splitting), 2 or 3 blue buckets.
For the new blue buckets that don't have a pivot/representative on the backbone yet, choose one uniformly at random, and put them on the backbone.
Finally, split all red buckets for which this makes sense using the newly chosen blue pivots.
\end{definition}

\begin{theorem}\label{backboneperfect}
With probability at least $1-1/n$, Backbonesort runs in time $O(n \log n)$ on the perfectly interleaved instance.
\end{theorem}
\begin{proof}
Consider an arbitrary red element~$r$. 
Consider one round of red pivoting, and let $B$ be the bucket of~$r$ before the round, and~$B'$ after the round.
Then, if $|B|>1$, with probability~1/8, $|B'|\le \frac34 |B|$:
In the natural order of the elements, with probability 1/4, the randomly chosen pivot is in the middle, i.e. rank 1/4 to 3/4, of the bucket, and not in the same half as the current pivot/representative of the bucket.
Now, the blue pivot is chosen uniformly between the old and the new pivot, and the probability of being in the middle half is at least 1/2 (the old pivot on the boundary, the new almost in the middle -- everything else has an even higher probability, potentially even~1).
Hence, with overall probability 1/8, $B$ is split in the middle half, and $|B'|\le \frac34 |B|$.

From the claim follows from a proof similar to that of Theorem~\ref{thm:height} that Backbonesort performs $O(\log n)$ rounds.
Because each round performs~$O(n)$ comparisons, the theorem follows. 
\end{proof}

\subsection{InversionSort for bipartite sorting with Hamiltonian}\label{sec:InvsortOnInterleaved}
While Theorem~\ref{thm:bipartiteinversionsort} shows only that InversionSort performs $O(N\log^3 N)$ comparisons on the perfectly interleaved instance, it is actually optimal.

\begin{theorem}
[InversionSort is optimal on perfectly interleaved]
\label{thm:InvsortOnInterleaved}
With probability at least $1-1/n$, InversionSort runs in time $O(n \log n)$ on the perfectly interleaved instance.
\end{theorem}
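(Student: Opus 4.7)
I would split the total comparison cost into the pivoting cost and the inversion-finding cost. The pivoting cost is already $O(N\log N)=O(n\log n)$ with high probability by Theorem~\ref{thm:pivot}, so all the effort goes into the inversion-finding cost.

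The key structural feature of the perfectly interleaved instance is that \emph{spill-in contributes no inversions}. In any subproblem between adjacent pivots $u_i<u_{i+1}$, a spill-in red $r\in X_i\cap(u_{i-1},u_i)$ satisfies $r<u_i<u_{i+1}<b$ for every blue $b\in X_{i+1}$, and the symmetric statement holds for spill-in blues. Hence every inversion sits between ``natural'' elements of $(u_i,u_{i+1})$, which in the perfectly interleaved instance alternate strictly in color. Writing $s_v^R,s_v^B$ for the natural red and blue sizes of a subproblem $v$, the inversion count is $\Theta(s_v^R s_v^B)$, and therefore an unaffected subproblem (in the sense of Section~\ref{sec:inversioncost}) has inversion density at least $1/32$.

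I would then organize InversionSort round-by-round, where one round draws a single random pair in each active subproblem at a total cost of $O(n)$ (the number of active subproblems never exceeds $2n$). The aim is to show that the algorithm terminates in $O(\log n)$ rounds with probability at least $1-1/n$. By Theorem~\ref{thm:height} the refinement tree has depth $O(\log n)$ on any root-to-leaf path, and a node that is unaffected finds its inversion within $O(\log n)$ rounds by a Chernoff bound applied to a geometric random variable with parameter at least $1/32$; a union bound over the $O(n)$ nodes of the tree controls the overall failure probability.

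The main obstacle is to bound the rounds spent on affected subproblems, whose inversion probability can be arbitrarily small. My plan exploits the identity that the left spill-in of $v$ equals the natural red count of $v$'s left neighbor: a heavily affected $v$ is flanked by a neighbor that is itself heavily unaffected and therefore resolves its own inversion quickly, and each such resolution pivots at least one element out of $v$'s spill-in. I would formalize this by a charging scheme that bills each affected round of $v$ to inversion-finding progress at the resolving neighbor; since each inversion in the tree can be blamed for only $O(1)$ such waiting rounds, and the tree has $O(n)$ inversions spread across $O(\log n)$ levels, the total contribution of affected subproblems is $O(n\log n)$ comparisons. Leaves of the tree in the interleaved instance occur only when the natural subinstance is empty, and by the time verification triggers, the same shrinking of spill-in has reduced each leaf's cost to $O(1)$, giving $O(n)$ total verification cost and the stated $O(n\log n)$ overall bound.
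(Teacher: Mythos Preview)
Your setup is sound: separating pivoting from inversion finding, observing that spill-in elements never participate in inversions, and concluding that unaffected subproblems have inversion density bounded below by a constant are all correct and are the right starting points for the perfectly interleaved instance.

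The gap is in your treatment of affected subproblems. The sentence ``each inversion in the tree can be blamed for only $O(1)$ such waiting rounds'' is the crux, and it is not justified. Your identity that the left spill-in of $v$ equals the natural red count of the left neighbor $w$ is correct, and it does imply that $w$ is unaffected \emph{on the red side}; but $w$ can still be affected on the blue side by spill-in from its own left neighbor, and so on. Thus you do not get a single resolving neighbor but a \emph{chain} of affected subproblems terminating at some unaffected head, and $v$ may have to wait through many resolutions along this chain before its own spill-in shrinks enough. Your accounting does not track the length of such chains, and the arithmetic ``$O(n)$ inversions times $O(1)$ rounds gives $O(n\log n)$'' does not even type-check (it gives $O(n)$). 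The same issue bites the leaf-verification claim: you assert spill-in has shrunk to $O(1)$ by the time verification triggers, but this again presupposes the missing chain argument.

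The paper's proof takes a different route that sidesteps per-round charging entirely. It introduces a potential $Q$ equal to the sum over current subproblems of the square of their sizes, and argues that in each step $Q$ drops by a constant factor with constant probability. The key structural fact it uses (which your plan does not exploit) is that in the perfectly interleaved instance, along any chain of affected subproblems the sizes decay geometrically by a factor~$4$, so squares decay by~$16$; hence the entire chain's contribution to $Q$ is within a constant of the unaffected head's contribution. When the head resolves (constant probability per step), its contribution to $Q$ drops by a constant factor, and the geometric domination transfers this to the whole bundle. This potential argument handles chains of arbitrary length uniformly, which is exactly what your charging scheme is missing.
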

\begin{proof}
\textbf{(Sketch)}
From Theorem~\ref{thm:height} follows that the pivoting cost is as claimed.
To bound the inversion cost, compute for each backbone (in retrospect / knowing the instance) the quantity~$Q$ as the sum of the squares of the sizes of the subproblems.
If we can show that, with constant probability, $Q$ is reduced by a constant factor in each step of the algorithm, the Theorem follows.
Actually, it would be enough to show that each summand of~$Q$ is reduced by a constant factor with constant probability. 
While this is true for the unaffected subproblems, it can fail for affected subproblems, i.e., subproblems that have a significant spill-in.
Hence, let's bundle together, for an unaffected subproblem~$p$, all subproblems that it spills in, and all subproblems these spill in and so on. 
These are at two chains starting at~$p$, one to the left and one to the  right. 
By the definition of spill-in, and because the instance is the perfectly interleaved one, the sizes of the spilled-in problems decrease by a factor of 4, and the squares of the sizes by a factor of 16.  
Hence the contribution of~$p$ to~$Q$ is dominating the contribution of the bundle, and its probability to get substantially reduced is enough to show that the whole bundle has a constant probability to reduce its contribution to~$Q$ by a constant factor.
\end{proof}

The following theorem shows that Backbonesort is quite far from being instance optimal, in contrast to InversionSort.

\begin{theorem}\label{thm:balloon}
For every $n$, there exists a family of instances $I_{n}$ with $n$ red keys and $n$ blue keys, such that the expected number of comparisons performed by Backbonesort is $\Omega(n^{1.5})$, whereas the expected cost of $OPT(I_{n}) = O( n \log n)$.
\end{theorem}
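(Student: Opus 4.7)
The plan is to exhibit a bad family $I_n$ and separately bound $\OPT(I_n)$ from above and the expected Backbonesort cost from below.

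\textbf{Construction of $I_n$.} I would design an instance whose canonical stripe structure is ``almost'' perfectly interleaved (so that $\OPT$ stays small) but contains carefully placed larger stripes that thwart Backbonesort's uniform pivot strategy. A natural first candidate is the instance whose canonical size vector consists mostly of singletons with, in the middle of the sorted order, a red stripe $R^*$ of size $\sqrt{n}$ immediately followed by a blue stripe $B^*$ of size $\sqrt{n}$, surrounded by singleton alternation $r,b,r,b,\ldots$ on both sides. By the formula $\sum_i s_i s_{i+1}$, the transitive reduction of this $I_n$ has $O(n)$ edges.

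\textbf{Upper bound on $\OPT(I_n)$.} Since most stripes are singletons, an instance-aware randomized algorithm that knows the stripe shape up to $\N_{AE}$ can recover the labelling in expected $O(n \log n)$ comparisons: for example, by binary-searching each element into a growing backbone, using $O(\log n)$ probes per element, and additionally using $O(n)$ bichromatic probes to certify the $R^*\text{-}B^*$ boundary. Since $\OPT$ is the best such algorithm, this gives $\OPT(I_n)=O(n\log n)$.

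\textbf{Lower bound on Backbonesort.} Every round of Backbonesort costs $\Omega(n)$ because it touches every element of every non-trivial bucket. The goal is therefore to show that at least $\Omega(\sqrt n)$ rounds are required in expectation. I would track a specific ``bottleneck'' bucket arising from the $R^*\text{-}B^*$ region and argue that its pivot, being uniform over the bucket, lands in a position that produces a useful split only with probability $\Theta(1/\sqrt n)$ per round. The reason: once a backbone representative is placed near the $R^*\text{-}B^*$ region, the red bucket there contains the $\sqrt n$ incomparable elements of $R^*$ plus possibly one ``singleton red'' from outside $R^*$, and the only way to separate those is to pick the singleton as pivot, which has probability $\Theta(1/\sqrt n)$. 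A coupon-collector style argument then yields $\Omega(\sqrt n)$ rounds in expectation, and hence $\Omega(n^{1.5})$ total comparisons.

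\textbf{Main obstacle.} The hard part is pinning down the right bottleneck structure. A naive instance with just one big stripe (as discussed in Section~\ref{sec:backbonesort}) gets resolved in $O(\log n)$ rounds because a uniform pivot in a mixed bucket splits a neighboring bucket with constant probability, mirroring the analysis behind Backbonesort's $O(n\log n)$ bound on the perfectly interleaved instance. To force genuinely $\Omega(\sqrt n)$ rounds, the stripe alignment in $I_n$ must be such that the random pivot's \emph{value} (not merely its rank in the bucket) is heavily biased toward one extreme of the neighboring bucket's range, so that a $1-\Theta(1/\sqrt n)$ fraction of random pivots produce no useful split; this might require interleaving several $\sqrt n$-sized stripes together with singletons in a specific pattern, while verifying that $\sum_i s_i s_{i+1}$ stays $O(n\log n)$. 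Formalising this per-round slowdown while keeping the transitive reduction small is the crux of the argument.
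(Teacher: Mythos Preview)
Your construction is inverted relative to what actually works. You put $\sqrt n$-sized stripes $R^*,B^*$ in the middle and singletons on the outside; the paper uses the size vector $(n-\sqrt n,1,1,\ldots,1,1,n-\sqrt n)$, i.e., two huge ``balloons'' (a red one and a blue one) of size $n-\sqrt n$ at the extremes and only $\sqrt n$ singletons of each color in the middle.

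This inversion breaks your lower bound. In your instance a uniformly random red pivot lands in a singleton stripe with probability $1-1/\sqrt n$, and such a pivot genuinely splits the blue bucket. The singleton regions therefore resolve in $O(\log n)$ rounds, exactly as in the perfectly interleaved case. After that only the $R^*$--$B^*$ region is still active, and a round now costs $O(\sqrt n)$, not $\Omega(n)$. Even granting your bottleneck claim that this region needs $\Theta(\sqrt n)$ further rounds, the total is $O(n\log n)+O(\sqrt n\cdot\sqrt n)=O(n\log n)$. The premise ``every round of Backbonesort costs $\Omega(n)$'' holds only while essentially all elements sit in non-trivial buckets, i.e., for the first $O(\log n)$ rounds; it does not persist for $\Omega(\sqrt n)$ rounds on your instance.

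The paper's key move is to make the \emph{useless} pivots the overwhelmingly likely ones. With balloons of size $n-\sqrt n$, a random red pivot is a balloon element with probability $1-1/\sqrt n$, and then \emph{every} comparison returns ``red $<$ blue'', so the blue bucket does not split at all (symmetrically for blue). Hence the backbone makes no progress whatsoever for an expected $\Theta(\sqrt n)$ rounds, and during each of those rounds all $2n$ elements still sit in a single pair of buckets, so each round really costs $\Omega(n)$. Your ``Main obstacle'' paragraph correctly identifies the requirement---a $1-\Theta(1/\sqrt n)$ fraction of pivots must fail to split---but your proposed fix of interleaving several $\sqrt n$-sized stripes keeps the total mass of useless stripes at $o(n)$ and so cannot meet it. The large mass must sit in the extreme stripes, which is exactly the balloon construction. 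Your $\OPT=O(n\log n)$ argument is fine and matches the paper's (which upper-bounds $\OPT$ by analyzing InversionSort on the same instance).
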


\begin{proof}
Consider the instance $I$ given by the size vector $(n-\sqrt{n},1,1,\cdots,1,1,n-\sqrt{n})$. The first stripe of reds is size $n-\sqrt{n}$, followed by $\sqrt{n}$ perfectly interleaved reds and blues, followed by a blue stripe of size $n-\sqrt{n}$. We call the big stripes at either ends ``balloons'', to distinguish them from the singleton stripes in the middle.

\noindent\textbf{Analyzing Backbonesort on I:} 
Backbonesort will see as input two sets $R$ and $B$ of size $n$. First, it will pick a random pivot $r$ in $R$, and pivot the $B$ using this. 
With probability $1-1/\sqrt{n}$, this pivot will be from the red balloon, and with probability $1/\sqrt{n}$, it will be one of the reds in the singleton stripe. 
It is easy to see that until Backbonesort ends up picking a red or blue pivot not in the respective balloon, all its comparisons come out as red < blue. 

Thus Backbonesort will select a pivot from the balloons $\Omega(\sqrt{n})$ times in expectation before it selects a useful red element that is not in the balloon. However, since pivoting costs $O(n)$ comparisons every time, at this point Backbonesort has already performed $\Omega(n^{1.5})$ comparisons in expectation.

\paragraph{Analyzing OPT on $I$:} We upper bound OPT by analyzing the running time of InversionSort on this instance. Assume that  the first comparison of a red with a blue gives $r < b$, a likely scenario since these elements are more likely to come from the balloon than the singleton stripes. Inversion sort will create a backbone with a red pivot $r$ and a blue pivot $b$, and will try to find an inversion between them. 
Now there are $\Theta(n)$ inversions between the $\sqrt{n}$ reds and the $\sqrt{n}$ blues in the middle, so the probability of finding an inversion is $\Theta(n/n^2) = \Theta(1/n)$. 
Thus in expected time $O(n)$ InversionSort will find an inversion $y<x$, where $y$ is blue and $x$ is red, to have $r<y<x<b$ on the backbone.

At this point, the balloon instances on the two sides get separated, but they each spill-in into the middle subproblem, and thus the middle subproblem will be affected. It can be shown that with constant probability the middle subproblem will not affect the two balloon subproblems. 

Assume $y$ is at position $k_{1}$ from the red balloom and $x$ is at position $k_{2}$ from the red balloon. Then the expected time to find an inversion in the left subproblem is $(n-\sqrt{n}+k_{1})k_{1}/(k_{1})^2 \approx n/k_{1}$. Similarly the expected time to find an inversion in the right subproblem is $n/k_2$. For the middle subproblem, its slightly worse: it will be $n^2/(\sqrt{n}-k_1-k_2)^2$, which could be as large as $n$.

When the side balloon get refined further, only their neighbors will get their spill-in, and the middle problems at the core start to become unaffected. At the extreme case, the balloons on the size only have $O(1)$-sized interleaving in the subproblem attached to them, and their expected cost to find an inversion becomes $O(n)$. 

However, at any point in time, if one considers the work done searching for an inversion\footnote{As we showed before,
the pivoting cost is always $O(n \log n)$.} 
on a layer of the decomposition tree, then the work on any subproblem does not exceed $O(n)$ in expectation by the above argument.  Using our general lemma on the depth of the tree being $O(\log n)$ (Lemma~\ref{thm:height}) for any run of InversionSort, we get that the expected cost of InversionSort on $I$ is $O(n \log n)$. 
Furthermore, it is easy to see that the cost of InversionSort on all instances in $\N_{AE}(I)$ is also $O(n \log n)$ in expectation; these instances have roughly the same number of singleton stripes, and roughly the same sized-balloons. Thus we get that $OPT(I) = O(n \log n)$ in expectation, proving the theorem.
\end{proof}

\section{Lower bound for finding the maximum with priced information}\label{apx:maximum_lb}

The instance from~\cite{gupta2001sorting} is the following:
there are two red nodes 1 and 2, and $n-2$ blue nodes.
There are $n-2$ cost 0 edges that show 1 is greater than all the blue nodes.
There are cost 1 edges between~2 and all blue nodes.
There is a cost~$n$ edge between 1 and 2.

The instance distribution is the following:
\begin{itemize}
    \item with probability 2/n: 2 is the maximum (then 1, then all blue)
    \item with probability 1/n: 1 is maximum, then one of the blues ($n-2$ cases), then 2, then all other blues
\end{itemize}
In the first case, the proof has cost~$n$, in each of the second cases, the proof has cost~$1$, it consist of the query that the special blue is greater than 2.

Any deterministic algorithm has the following costs:
If it queries the edge between 1 and 2, it has cost~$n$.
If it queries the edges between 2 and a  blue in any order, the expected number of queries to find the special blue is~$n/2$.

The expected competitive ratio can be calculated as follows.
If the deterministic algorithm queries the 1 - 2 edge, the ratio is $n/n=1$ in the first case, and $n/1=n$ in all other cases, giving an expected ratio of~$2/n+ (n-2)n/n > n-2$.
If the algorithm does not query the 1-2 edge (and relies on the promise of the distribution that if all blue are smaller than 2, then 2 is the maximum), then 
the ratios are $1, 2, 3,\ldots, n-1, n-1$, depending on which of the edges to the blues is the special one.
This gives an expected ratio of roughly $n/2$. This implies that the competitive ratio of any deterministic algorithm on the distribution above is always $\Omega(n)$. The same holds for any randomized algorithm too, by Yao's principle.

\section{Neighborhood-Based Framework and Existing Works}\label{app:framework}

We start this section with the disclaimer that while this neighborhood-based approach of looking at instance-optimality may not have appeared in print, it may nevertheless be known implicitly to researchers in the field of instance optimality. We do not claim  novelty in this section, but felt it worthwhile to show that existing works fit this paradigm.

\noindent\textbf{Adaptive Sorting and Pre-sortedness.} The classical approach to define how a sorting algorithm is adaptive to pre-sortedness of the input~$\I$, is to define a measure, that assigns a number~$\M(\I)$ to it and by this characterizes how difficult or unusual an input is \cite{AdaptiveCastroWood92}.
Inspired by practical adaptive sorting algorithms that work well in practice, different measure were introduced, and one point of the survey~\cite{AdaptiveCastroWood92} is, that ideally $\M$ reflects the practical setting in which the sorting algorithm operates. 
There are some adaptive sorting algorithms that are asymptotically optimally adaptive to several measures, but practically one is paying an overhead for this universality.

All this works nicely fits into our framework of instance optimality. 
For an instance $\I$, we define the neighborhood $\N(\I) = \{ \I' \mid \M(\I') \le \M(\I) \}$, i.e. all instances that are at least as presorted as the instance at hand.
This leads to a set up, where $O(1)$ instance optimality is the same as $O(1)$ adaptiveness, as defined in \cite{AdaptiveCastroWood92}.

Let us explore if the dependency on the measure of pre-sortedness is necessary. 
We claim that it is very important, and that there must be an aspect of ``human judgement'' in what is considered an interesting such measure, and that it is impossible to be adaptive to all measures simultaneously.
To show this, fix an input size~$N$, and any of the natural measures~$\M$ of pre-sortedness, like counting the number of inversions on input of length~$N$.
Now consider for each of the $N!$ permutations~$\pi$ the measure $\M_\pi$, defined as first applying the permutation $\pi$ to the input and then measuring the pre-sortedness of the resulting string. 
Let~$\A$ be an optimal adaptive algorithm for~$\M$ and let $\A_\pi$ be this algorithm, where the input is first permuted according to~$\pi$. 
Clearly, $\A_\pi$ is optimal adaptive for $\M_\pi$. 
In contrast, it is impossible to have an algorithm~$\A$ that is $O(1)$ optimally adaptive to all $\M_\pi$ simultaneously:
For any input $\I$, there exists the permutation $\pi_\I$ that sorts $\I$, such that $\M\pi(\I) = 0$, and hence the running time of~$\A$ must be linear. 
Hence, the running time of~$\A$ must be linear on all inputs, contradicting the well known $n \log n$ lower bound.

\noindent\textbf{Set intersections, unions and differences} Several other examples of instance optimality are given in \cite{demaine2000adaptive}. 
Let us zoom in on binary merging as an instructive example from that paper. 
In merging, we assume that we have two sequences $A$ and $B$ that are already sorted, and the task is to merge, i.e., identify the sorted sequence of the union of the two.
Classically, in the worst case, this can be done with $|A|+|B|-1$ comparisons, by iteratively removing the smallest element.
Observe that in the case where all of $A$ is smaller than all of $B$, a single comparison can be enough.
An \emph{argument} as defined by \cite{demaine2000adaptive} is a set of comparisons that are sufficient to identify the output. 
We would call this a certificate, and it corresponds to the edges on the transitive reduction that are not known a priori.
Intuitively, there is no chance that a general algorithm can get away with a number of comparisons linear in certificate size,
and the classical improved merging algorithm (specially advocated for merging two sequences that are quite different in length) advances by performing exponential searches, not only identifying the current smallest element, but the whole prefix of one sequence that is smaller than the smallest element of the other sequence. Call such a sequence a \emph{run}.
Now the cost of the algorithm is basically the sum of the logarithms of the length of all but the largest runs.
Here \cite{demaine2000adaptive} continue by using the information theory of encoding natural numbers to argue that the comparisons performed bythe merging algorithm are necessary.

In our framework, we would define the neighborhood of an instance with runs of length $r_1,b_1,\ldots$ to be all instances with the same number of runs and only the longest run length potentially increased and all other run length potentially decreased.
Using the fact that comparison based algorithms are branching only in 2 ways, the log of the size of the neighborhood is a lower bound, and this is precisely what \cite{demaine2000adaptive} uses to show optimality of the mentioned algorithm.
Observe that by the discussion above about binary search, we could change the assumptions about what run lengths are most likely, and would get different notions of instance optimality, all of which allow for optimal algorithms, but are different.

\noindent\textbf{Universal optimality} The notion of instance optimality is also the focus of \cite{haeupler2023universal}, where they study Dijkstra's shortest path algorithm and introduce the notion of \emph{Universal Optimality}.
In our framework, we define the neighborhood of a weighted graph~$G$ to be all weighted graphs that share the same underlying graph and only differ in the weight function. 
Now our notion of instance optimality coincides with the notion of universal optimality, both compare the number of comparisons of an algorithm (in \cite{haeupler2023universal} Dijkstra's algorithm with a novel priority queue that implements ``the right'' working set property) with the number of comparisons any algorithm that is correct on the neighborhood needs to perform (worst case on the neighborhood).

Finally, we observe that the work on learning-augmented algorithm \cite{mitzenmacher2020algorithms} may also naturally fit into the neighborhood-based framework, as predictions help the algorithm narrow down the neighborhood that the instance lies in.  %

\end{document}